\definecolor{cornellred}{rgb}{0.7, 0.11, 0.11}
\definecolor{dgreen}{rgb}{0.0, 0.5, 0.0}
\definecolor{ballblue}{rgb}{0.13, 0.67, 0.8}
\definecolor{bleudefrance}{rgb}{0.19, 0.55, 0.91}
\theoremstyle{plain}
\newtheorem{proposition}{Proposition}
\newtheorem{theorem}{Theorem}
\newtheorem{lemma}{Lemma}
\theoremstyle{definition}
\newtheorem{definition}{Definition}
\newtheorem{remark}{Remark}
\newcommand{\ie}{i.e.,\xspace}
\newcommand{\RR}{\ensuremath{\mathbb{R}}}
\newcommand{\Ex}[1]{\mbox{\rm\bf E}\left[#1\right]}
\DeclareMathOperator{\argmax}{argmax}
\DeclareMathOperator{\argmin}{argmin}
\newcommand{\dvx}[2]{\frac{\partial #1}{\partial #2}}
\newcommand{\dvxx}[2]{\frac{\partial^2 #1}{\partial {#2}^2 }}
\newcommand{\dvxy}[3]{\frac{\partial^2 #1}{\partial {#2}\partial {#3}}}
\newcommand{\F}{\mathcal{F}}
\newcommand{\xbf}{\textbf{x}}
\newcommand{\xv}{\textbf{x}}
\newcommand{\yv}{\textbf{y}}
\newcommand{\rv}{\textbf{r}}
\newcommand{\cen}[1]{\texttt{conc-env}(#1)}
\newcommand{\zu}{{Z}_{u}}
\newcommand{\zl}{{Z}_{l}}
\newcommand{\za}{\hat{z}}
\newcommand{\Xbf}{\mathbf{X}}
\newcommand{\Ybf}{\mathbf{Y}}
\newcommand{\Pbf}{\mathcal{P}}
\newcommand{\xo}{x^*}
\newcommand{\xob}{\mathbf{\xo}}
\newcommand{\Obf}{\mathbf{O}}
\newcommand{\val}{\mathcal{V}^{(i)}}
\def\noqed{\renewcommand{\qedsymbol}{}}
\newcommand{\Hbf}{\mathbf{H}}
\newcommand{\hbf}{\mathbf{h}}
\newcommand{\NRWGameAlg}{\texttt{GAME}}
\newcommand{\NRWBinarySearchAlg}{\texttt{BINARY}}
\newcommand{\BMBKAlg}{\texttt{BMBK}}
\newcommand{\Lbf}{\mathbf{L}}
\newcommand{\Ibf}{\mathbf{I}}
\newcommand{\Dbf}{\mathbf{D}}
\newcommand{\Vbf}{\mathbf{V}}
\newcommand\join{\lor}
\newcommand\meet{\land}
\renewcommand\L{\mathcal{L}}
\begin{document}

\setcounter{page}{1} 

\title{Optimal Algorithms for Continuous
  Non-monotone Submodular and DR-Submodular Maximization}

\author[$\dagger$]{Rad Niazadeh}
\author[$\dagger$]{Tim Roughgarden}
\author[$\dagger$]{Joshua R. Wang}
\affil[$\dagger$]{Department of Computer Science, Stanford University}
\renewcommand\Authands{ and }
\date{}

\maketitle

\begin{abstract}

In this paper we study the fundamental problems of maximizing a continuous non-monotone submodular function over the hypercube, both with and without coordinate-wise concavity. This family of optimization problems has several applications in machine learning, economics, and communication systems. Our main result is the first $\frac{1}{2}$-approximation algorithm for continuous submodular
function maximization; this approximation factor of~$\tfrac{1}{2}$ is the best
possible for algorithms that only query the objective function at polynomially many points.  For
the special case of DR-submodular maximization, i.e. when the submodular functions is also coordinate-wise concave along all coordinates, we provide a different
$\frac{1}{2}$-approximation algorithm that runs in quasi-linear time.
Both of these results improve upon prior work~\citep{bian2017continuous,bian2017guaranteed,soma2017non}. 
  
  Our first
  algorithm uses novel ideas such as reducing the guaranteed
  approximation problem to analyzing a zero-sum game for each
  coordinate, and incorporates the geometry of this zero-sum game to
  fix the value at this coordinate. Our second algorithm 
exploits 
coordinate-wise concavity to identify a monotone equilibrium
  condition sufficient for getting the required approximation
  guarantee, and hunts for the equilibrium point using binary
  search. We further run experiments to verify the performance of our
  proposed algorithms in related machine learning applications. 

\end{abstract}
\section{Introduction}


{\em Submodular optimization} is a sweet spot between tractability and
expressiveness, with numerous applications in machine learning
(e.g. \cite{krause2014submodular}, and see below) while permitting many algorithms that are
both practical and backed by rigorous guarantees (e.g. \cite{buchbinder2015tight,feige2011maximizing,calinescu2011maximizing}). In general, a real-valued function~$\F$ defined on a lattice $\L$
is {\em submodular} if and only if
\[
\F(x \join y) + \F(x \meet y) \le \F(x) + \F(y)
\]
for all $x,y \in \L$, where $x \join y$ and $x \meet y$ denote the
join and meet, respectively, of $x$ and $y$ in the lattice $\L$.  
Such functions are generally neither convex nor concave.
In one of the most commonly studied examples, $\L$ is the lattice of
subsets of a fixed ground set (or a sublattice thereof), with union
and intersection playing the roles of join and meet, respectively.

This paper concerns a different well-studied setting, where $\L$ is a
hypercube (i.e., $[0,1]^n$), with componentwise maximum and minimum
serving as the join and meet, respectively.\footnote{Our results also
  extend easily to arbitrary axis-aligned boxes (i.e., ``box
  constraints'').}  We consider the fundamental problem of
(approximately) maximizing a continuous and nonnegative submodular
function over the hypercube.\footnote{More generally, the function
  only has to be nonnegative at the points $\vec{0}$ and $\vec{1}$.}
The function~$\F$ is given as a ``black
box'': accessible only via querying its value at a point.
We are interested in algorithms that use at most a polynomial (in~$n$)
number of queries.  We do not assume that~$\F$ is monotone (otherwise
the problem is trivial).


We next briefly mention four applications of maximizing a non-monotone
submodular function over a hypercube that are germane to machine
learning and other related application domains.\footnote{See the supplement for more details on these applications.}

\noindent
\textbf{Non-concave quadratic programming.}
In this problem, the goal
is to maximize $\F(\xbf)=\frac{1}{2}\xbf^T\Hbf\xbf+\hbf^T\xbf+c$,
where the off-diagonal entries of $\Hbf$ are non-positive.
One application of this problem is to large-scale
price optimization on the basis of demand forecasting models~\citep{ito2016large}.


\noindent
\textbf{Map inference for Determinantal Point Processes (DPP).}
  DPPs are elegant probabilistic models that arise in statistical
  physics and random matrix theory.  DPPs can be used as generative
  models in applications such as text summarization, human pose
  estimation, and news threading
  tasks~\citep{kulesza2012determinantal}.  
The approach in~\cite{gillenwater2012near} to the problem boils down
to maximize a suitable submodular function over the hypercube, accompanied with an appropriate rounding (see
also~\citep{bian2017continuous}). One can also think of regularizing this objective function with $\ell_2\textrm{-norm}$ regularizer, in order to avoid overfitting. Even with a regularizer, the function remains submodular.

\noindent
\textbf{Log-submodularity and mean-field inference.}
Another probabilistic model that generalizes DPPs and all other
  strong Rayleigh measures~\citep{li2016fast,zhang2015higher} is the
  class of \emph{log-submodular} distributions over sets,
  i.e. $p(S)\sim\exp(\F(S))$ where $\F(\cdot)$ is a set submodular
  function. MAP inference over this distribution has applications in
  machine learning~\citep{djolonga2014map}.
  One variational approach towards this MAP inference task is to use
  \emph{mean-field inference} to approximate the distribution $p$ with
  a product distribution $\xbf\in [0,1]^n$, which again boils down to
  submodular function maximization over the hypercube 
(see~\citep{bian2017continuous}).

\noindent
\textbf{Revenue maximization over social networks.} In this problem, there
  is a seller who wants to sell a product over a social network of
  buyers. To do so, the seller gives away trial products and fractions
  thereof to the buyers in the
  network~\citep{bian2017guaranteed,hartline2008optimal}.
  In~\citep{bian2017guaranteed}, there is 
an objective function that takes into account two
  parts: the revenue gain from those who did not get a free product,
  where the revenue function for any such buyer is a non-negative
  non-decreasing and submodular function $R_i(\xbf)$; and the revenue
  loss from those who received the free product, where the revenue
  function for any such buyer is a non-positive non-increasing and
  submodular function $\bar{R}_i(\xbf)$.  The combination for all
  buyers is a non-monotone submodular function. It also is non-negative
  at $\vec{0}$ and $\vec{1}$, by extending the model and accounting for extra revenue gains from buyers with free trials.




\paragraph{Our results.}
Maximizing a submodular function over the hypercube is at least as
difficult as over the subsets of a ground set.\footnote{An instance of
  the latter problem can be converted to one of the former by
  extending the given set function~$f$ (with domain viewed as
  $\{0,1\}^n$) to its multilinear extension~$\F$ defined on the
  hypercube (where $\F(\xbf) = \sum_{S \subseteq [n]} \prod_{i \in
    S} x_i \prod_{i \notin S} (1-x_i) f(S)$).  Sampling based on an $\alpha$-approximate
  solution for the multilinear extension yields an equally
  good approximate solution to the original problem.}
For the latter problem, the best approximation ratio achievable by an
algorithm making a polynomial number of queries is $\tfrac{1}{2}$; 
the (information-theoretic) lower bound is due to~\citep{feige2011maximizing}, the
optimal algorithm to~\citep{buchbinder2015tight}.  Thus, the best-case scenario for
maximizing a submodular function over the hypercube (using
polynomially many queries) is a $\tfrac{1}{2}$-approximation.  The
main result of this paper achieves this best-case scenario:
\newpage
\begin{displayquote}
\emph{There is an algorithm for maximizing a continuous submodular function over the
hypercube that guarantees a $\tfrac{1}{2}$-approximation while using
only a polynomial number of queries to the function under mild continuity assumptions.}
\end{displayquote}
Our algorithm is inspired by the \emph{bi-greedy} algorithm of~\cite{buchbinder2015tight}, which maximizes a submodular set function; it maintains two solutions initialized at
$\vec{0}$ and $\vec{1}$, go over coordinates sequentially, and make
the two solutions agree on each coordinate. The algorithmic question here is
how to choose the new coordinate value for the two solutions, so that the algorithm gains enough value relative to the optimum in each
iteration. Prior to our work, the best-known result was a $\tfrac{1}{3}$-approximation~\citep{bian2017guaranteed}, which is also inspired by the bi-greedy. Our algorithm requires a number of new ideas, including a reduction to the analysis of a zero-sum game for each coordinate, and the use of the special geometry of this game to bound the value of the game.

The second and third applications above induce objective functions
that, in addition to being submodular, are concave in each coordinate\footnote{However, after regularzation the function still remains submodular, but can lose coordinate-wise concavity.}
(called {\em DR-submodular} in~\citep{soma2015generalization} based on diminishing returns defined in \citep{kapralov2013online}).  Here, an optimal
$\tfrac{1}{2}$-approximation algorithm was recently already known on integer lattices~\citep{soma2017non}, that can easily be generalized to our continuous setting as well;
our contribution is a significantly faster such bi-greedy algorithm.  The main
idea here is to identify a monotone equilibrium condition sufficient
for getting the required approximation guarantee, which enables a
binary search-type solution.

We also run experiments to verify the performance of our proposed
algorithms in practical machine learning applications. We observe that
our algorithms match the performance of the prior work, while
providing either a better guaranteed approximation or a better running
time.







\paragraph{Further related work.} 
\citet{buchbinder2016deterministic} derandomize the bi-greedy algorithm.
\citet{staib2017robust} apply continuous submodular optimization to budget allocation, and develop a new submodular optimization algorithm to this end.
\citet{hassani2017gradient} give a $\frac12$-approximation for \emph{monotone} continuous submodular functions under convex constraints.
\citet{gotovos2015non} consider (adaptive) submodular maximization when feedback is given after an element is chosen.
\citet{chen2018online, roughgarden2018optimal} consider submodular maximization in the context of online no-regret learning.
\cite{mirzasoleiman2013distributed} show how to perform submodular maximization with distributed computation. Submodular minimization has been studied in \cite{schrijver2000combinatorial,iwata2001combinatorial}. See \cite{bach2013learning} for a survey on more applications in machine learning.

\paragraph{{Variations of continuous submodularity.}}
We consider non-monotone non-negative \emph{continuous submodular functions}, i.e. $\F:[0,1]\rightarrow [0,1]^n$ s.t. $\forall \xv,\yv\in[0,1]^n$, $\F(\xv)+\F(\yv)\geq \F(\xv\vee\yv)+\F(\xv\wedge\yv)$, where $\vee$ and $\wedge$ are coordinate-wise max and min operations. Two related properties are \emph{weak Diminishing Returns Submodularity} (\texttt{weak DR-SM}) and \emph{strong Diminishing Returns Submodularity} (\texttt{strong DR-SM})~\citep{bian2017guaranteed}, formally defined below. Indeed, \texttt{weak DR-SM} is equivalent to submodularity (see \Cref{prop:weak-strong-multi-defs} in the supplement), and hence we use these terms interchangeably.
\begin{definition}[\texttt{Weak/Strong DR-SM}] 
\label{def:weak-strong-DR}
Consider a continuous function $\F:[0,1]^n\rightarrow [0,1]$:
\begin{itemize}
\leftskip-0.9cm
\item \texttt{Weak DR-SM} (continuous submodular): $\forall i\in[n],~\forall \xv_{-i}\leq \yv_{-i}\in[0,1]^n$, and $\forall \delta\geq 0, \forall z$
\[
\F(z+\delta,\xv_{-i} )-\F(z,\xv_{-i})\geq \F(z+\delta,\yv_{-i} )-\F(z,\yv_{-i})
\]
\item \texttt{Strong DR-SM} (\emph{DR-submodular}
): $\forall i\in[n],~\forall\xv\leq \yv\in[0,1]^n$, and $\forall \delta\geq 0$:
\[
\F(x_i+\delta,\xv_{-i} )-\F(\xv)\geq \F(y_i+\delta,\yv_{-i} )-\F(\yv)
\]
\end{itemize}
\end{definition}
As simple corollaries, a twice-differentiable $\F$ is \texttt{strong DR-SM}
 if and only if all the entries of its Hessian
are non-positive, and \texttt{weak DR-SM} if and only if all of the \emph{off-diagonal} entries of
its Hessian are non-positive. Also, \texttt{weak DR-SM} together with
concavity along each coordinate is equivalent to \texttt{strong DR-SM}
(see \Cref{prop:weak-strong-multi-defs} in the supplementary materials
for more details).
\paragraph{{Coordinate-wise Lipschitz continuity.}}
Consider univariate functions generated by fixing all but one of the coordinates of the original function $\F(\cdot)$. In future sections, we sometimes require mild technical assumptions on the Lipschitz continuity of these single dimensional functions.
\begin{definition}[Coordinate-wise Lipschitz] 
\label{def:coordinate-Lip}

A function $\F:[0,1]^n\rightarrow [0,1]$ is \emph{coordinate-wise Lipschitz continuous} if there exists a constant $C>0$ such that $\forall i\in[n]$, $\forall \xv_{-i}\in[0,1]^n$, the single variate function $\F(\cdot,\xv_{-i})$ is $C$-Lipschitz continuous, \ie
\begin{equation*}
\forall z_1,z_2\in[0,1]:~~\lvert \F(z_1,\xv_{-i})-\F(z_2,\xv_{-i}) \rvert \leq C\lvert z_1-z_2 \rvert
\end{equation*}
\end{definition}
%
\vspace{-5mm}
\section{Weak DR-SM Maximization: Continuous Randomized Bi-Greedy}
\label{sec:weak-DR}
Our first main result is a 
$\frac{1}{2}$-approximation algorithm (up to additive
error $\delta$) for maximizing a continuous submodular function $\F$,
a.k.a. \texttt{weak DR-SM}, which is information-theoretically 
optimal~\citep{feige2011maximizing}. 
This result assumes that $\F$ is coordinate-wise Lipschitz
continuous.\footnote{Such an assumption is necessary, since otherwise the
  single-dimensional problem amounts to optimizing an arbitrary function and
  is hence intractable. Prior work,
  e.g.~\cite{bian2017guaranteed} and \cite{bian2017continuous},
  implicitly requires such an assumption to perform single-dimensional optimization.} Before describing our algorithm, we introduce the notion of the \emph{positive-orthant concave envelope} of a two-dimensional curve, which is useful for understanding our algorithm.

\begin{definition} 
\label{def:concave-envelope} 
Consider a curve $\rv(z)=(g(z),h(z))\in\mathbb{R}^2$ over the interval $z \in [\zl,\zu]$ such that:
\begin{enumerate}
\item $g:[\zl,\zu]\rightarrow [-1,\alpha]$ and $h: [\zl,\zu]\rightarrow [-1,\beta]$ are both continuous,
\item $g(\zl)=h(\zu)=0$, and  $h(\zl)=\beta\in [0,1],~g(\zu)=\alpha\in[0,1]$.
\end{enumerate}
Then the \emph{positive-orthant concave envelope} of $\rv(\cdot)$, denoted by $\cen{\rv}$, is the smallest concave curve in the positive-orthant upper-bounding all the points $\{\mathbf{r}(z):z\in[\zl,\zu]\}$ (see \Cref{fig:gh-curve}), \ie
\begin{equation*}
\cen{\rv}\triangleq \textrm{upper-face}\left(\textrm{conv}\left(\{\mathbf{r}(z):z\in[\zl,\zu]\}\right)\cap \left\{(g',h')\in[0,1]^2:\frac{h'}{\beta}+ \frac{g'}{\alpha}\geq 1\right\}\right)
\end{equation*}
\end{definition}
\begin{figure}
        \centering
        \begin{subfigure}{0.8\textwidth}
            
                 \includegraphics[scale=0.12]{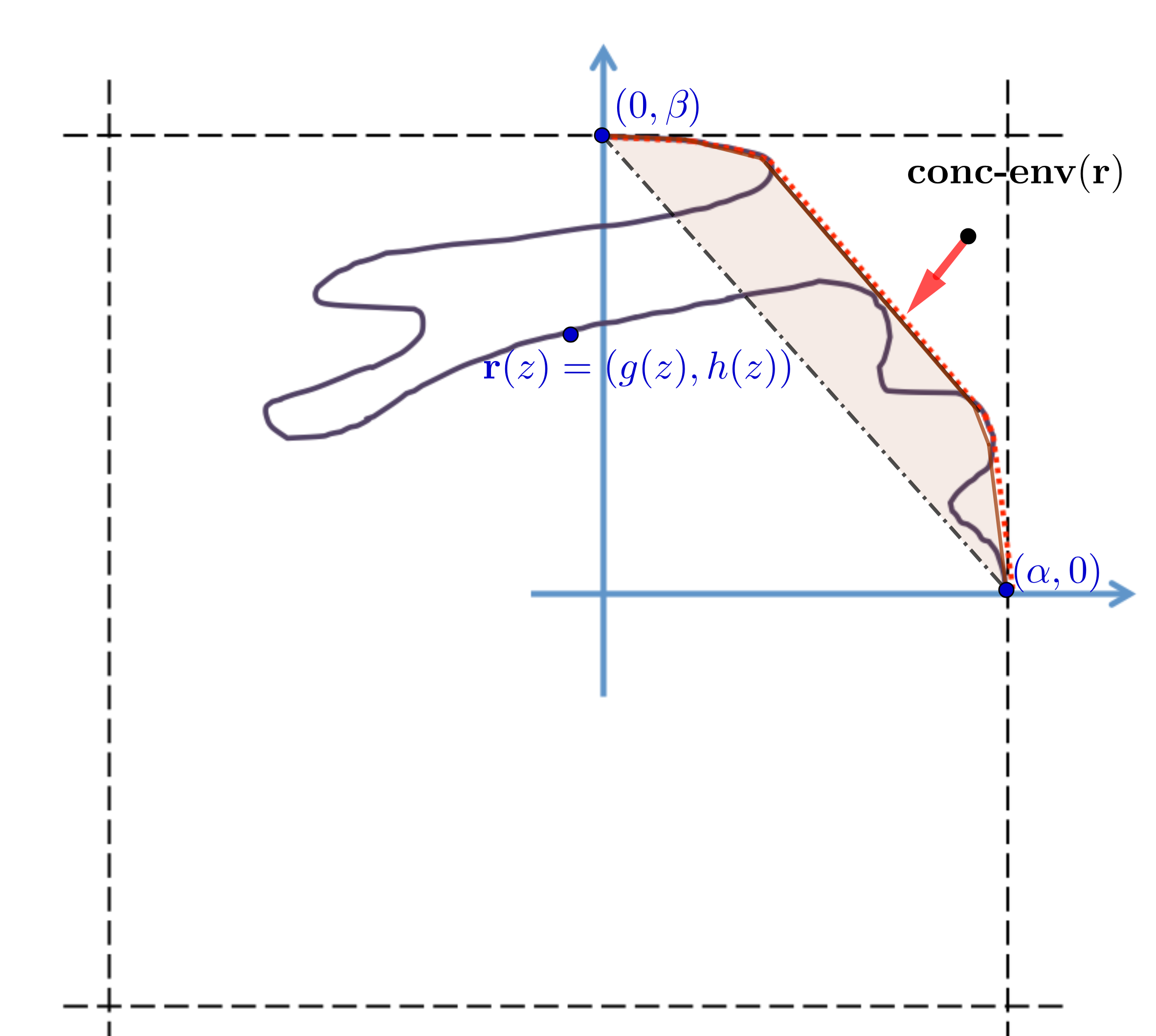}
\caption{ \label{fig:gh-curve} Continuous curve $\mathbf{r}(z)$ in $\mathbb{R}^2$ (dark blue), positive-orthant concave envelope (red).} 
        \end{subfigure}\\
        \begin{subfigure}{0.75\textwidth}
                \centering
                \leftskip=-28mm
                \includegraphics[scale=0.65]{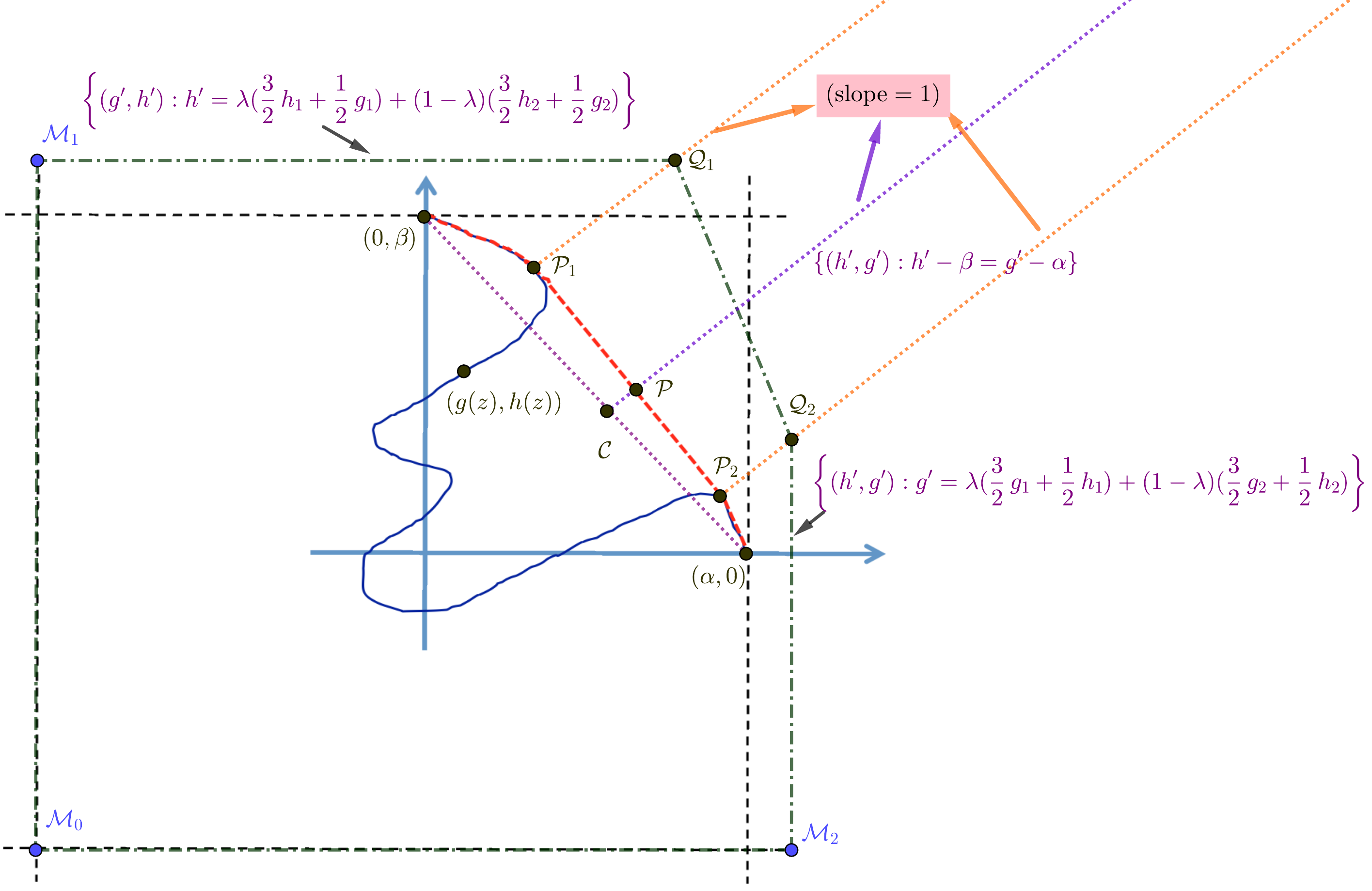}
\caption{\label{fig:pos-region} Pentagon $(\mathcal{M}_0,\mathcal{M}_1,\mathcal{Q}_1,\mathcal{Q}_2,\mathcal{M}_2)$= \texttt{ADV} player's positive region against a mixed strategyover two points $\Pbf_1$ and $\Pbf_2$.}
        \end{subfigure}
        \caption{}
\end{figure}

We start by describing a vanilla version of our algorithm for maximizing $\F$ over the unit hypercube, termed as \emph{continuous randomized bi-greedy} (\Cref{alg:bigreedy}). This version assumes blackbox oracle access to algorithms for a few computations  involving univariate functions of the form $\F(.,\xv_{-i})$ (e.g. maximization over $[0,1]$, computing \cen{.}, etc.). We first prove that the vanilla algorithm finds a solution with an objective value of at least $\frac{1}{2}$ of the optimum. In \Cref{sec:poly-implement}, we show how to approximately implement these oracles in polynomial time when $\F$ is coordinate-wise Lipschitz.

\begin{algorithm}[ht]
\small
\caption{(Vanilla) Continuous Randomized Bi-Greedy}
\label{alg:bigreedy}
 \textbf{input}: function $\F:[0,1]^n\rightarrow [0,1]$~\;
 \textbf{output}: vector $\mathbf{\za}=(\za_1,\ldots,\za_n)\in[0,1]^n$~\;

Initialize $\Xbf\leftarrow (0,\ldots,0)$ and $\Ybf\leftarrow (1,\ldots,1)$~\;
\For{$i=1$ to $n$}{
 Find $\zu,\zl\in[0,1]$ such that $\begin{cases}
     \zl\in\underset{z\in[0,1]}{\argmax}~\F(z,\Ybf_{-i})\\
       \zu\in\underset{z\in[0,1]}{\argmax}~\F(z,\Xbf_{-i})
     \end{cases}$~\;
     
\If {$\zu\leq \zl$}
{$\za_i\leftarrow \zl~;$} 
\Else
{$\forall z\in[\zl,\zu]$, let $\begin{cases}
g(z)\triangleq\F(z,\Xbf_{-i})-\F(\zl,\Xbf_{-i}),\\
h(z)\triangleq\F(z,\Ybf_{-i})-\F(\zu,\Ybf_{-i}),
\end{cases}$~\;

Let $\alpha\triangleq g(\zu)$ and $\beta\triangleq h(\zl)$~;~~~~~~~~\tcp{note that $\mathtt{\alpha},\mathtt{\beta}\geq 0$}

 Let $\rv(z)\triangleq \left(g(z),h(z)\right)$ be a continuous two-dimensional curve in $[-1,\alpha]\times[-1,\beta]$~\;
 
Compute $\cen{\rv}$ (i.e. positive-orthant concave envelope of $\rv(t)$ as in \Cref{def:concave-envelope})~\;

Find point $\Pbf\triangleq~$intersection of $\cen{\rv}$ and the line $h'-\beta=g'-\alpha$ on g-h plane~\;

 Suppose $\Pbf=\lambda \Pbf_1+(1-\lambda)\Pbf_2$, where $\lambda\in[0,1]$ and $\Pbf_j=\rv(z^{(j)}), z^{(j)}\in[\zl,\zu]$ for $~~~~~~~~~~~~~j=1,2$, and both points are also on the \cen{$\rv$}~;~~~~~~\tcp{see \Cref{fig:pos-region}}
 
Randomly pick $\za_i$ such that $\begin{cases}
\za_i\leftarrow z^{(1)}~~~\textrm{with probablity}~\lambda\\
\za_i\leftarrow z^{(2)}~~~\textrm{o.w.}
\end{cases}$~\;}

 Let $X_i\leftarrow \za_i $ and $Y_i\leftarrow \za_i$~;~~~~~~\tcp{after this, $\Xbf$ and $\Ybf$ will agree on coordinate $i$}
}
\Return {$\mathbf{\za}=(\za_1,\ldots,\za_n)$}
\end{algorithm}

\begin{theorem}
\label{thm:vanilla-alg-weak}
If $\F(\cdot)$ is non-negative and continuous submodular (or equivalently is \texttt{\emph{weak DR-SM}}), then \Cref{alg:bigreedy} is a randomized $\frac{1}{2}$-approximation algorithm, i.e. returns $\mathbf{\za}\in[0,1]^n$ s.t. 
\[
2 \Ex{\F(\mathbf{\za})}\geq\F(\xob),~~~~~~~~~~\textrm{where $\xob\in\underset{\xbf\in[0,1]^n}{\argmax}~\F(\xbf)$ is the optimal solution.}
\]
\end{theorem}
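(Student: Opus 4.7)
The overall approach is a charging argument standard for double-greedy algorithms, combined with a per-coordinate zero-sum game whose value is controlled via the geometry of the concave envelope. Define hybrid solutions $\Obf^{(i)} \in [0,1]^n$ that agree with $\xob$ on coordinates $i+1, \dots, n$ and with $(\za_1, \dots, \za_i)$ on coordinates $1, \dots, i$, and let $\Xbf^{(i)}, \Ybf^{(i)}$ denote the algorithm's $\Xbf, \Ybf$ after iteration $i$, so that $\Obf^{(0)} = \xob$ and $\Obf^{(n)} = \Xbf^{(n)} = \Ybf^{(n)} = \mathbf{\za}$. It suffices to establish, for each $i$, the per-iteration charging inequality
\[
2\,\Ex{\F(\Obf^{(i-1)}) - \F(\Obf^{(i)})} \;\leq\; \Ex{\F(\Xbf^{(i)}) - \F(\Xbf^{(i-1)})} + \Ex{\F(\Ybf^{(i)}) - \F(\Ybf^{(i-1)})};
\]
summing over $i$ and using $\F(\vec{0}), \F(\vec{1}) \geq 0$ telescopes to $\F(\xob) \leq 2\,\Ex{\F(\mathbf{\za})}$.

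Fix iteration $i$ and write $\mathbf{Z}_{-i}$ for the common restriction of $\Obf^{(i-1)}, \Obf^{(i)}$ away from coordinate $i$, so $\Xbf_{-i} \leq \mathbf{Z}_{-i} \leq \Ybf_{-i}$. Set $L \coloneqq \F(o_i, \mathbf{Z}_{-i}) - \F(\za_i, \mathbf{Z}_{-i})$, $a \coloneqq \F(\za_i, \Xbf_{-i}) - \F(0, \Xbf_{-i})$, and $b \coloneqq \F(\za_i, \Ybf_{-i}) - \F(1, \Ybf_{-i})$. Two applications of weak DR-SM together with the argmax conditions yield $\F(z_l, \Xbf_{-i}) \geq \F(0, \Xbf_{-i})$ and $\F(z_u, \Ybf_{-i}) \geq \F(1, \Ybf_{-i})$, hence in the hard case $a \geq g(\za_i)$ and $b \geq h(\za_i)$, and the per-iteration inequality reduces to $2\,\Ex{L} \leq \Ex{g(\za_i) + h(\za_i)}$. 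A further application of weak DR-SM, splitting on whether $o_i \geq \za_i$ or $o_i \leq \za_i$, gives the two ``extreme'' upper bounds $L \leq \alpha - g(\za_i)$ and $L \leq \beta - h(\za_i)$ respectively, using the argmax bounds $g(o_i) \leq \alpha$, $h(o_i) \leq \beta$ on $[0,1]$.

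In the easy case $z_u \leq z_l$, combining the argmax inequalities with weak DR-SM squeezes $\F(z_l, \Xbf_{-i}) = \F(z_u, \Xbf_{-i}) = \max_z \F(z, \Xbf_{-i})$, so picking $\za_i = z_l$ forces $L \leq 0$ for every $o_i$ while keeping $a, b \geq 0$; the target inequality is immediate. In the hard case $z_l < z_u$, view $2\,\Ex{L} - \Ex{g(\za_i) + h(\za_i)} \leq 0$ as a zero-sum game: \texttt{ALG} picks a distribution on $\za_i \in [z_l, z_u]$ and \texttt{ADV} picks $o_i \in [0,1]$. The algorithm plays the two-point mixture on $z^{(1)}, z^{(2)}$ whose images $\Pbf_1, \Pbf_2$ lie on $\cen{\rv}$ and whose convex combination $\Pbf = (g^*, h^*)$ additionally satisfies $\alpha - g^* = \beta - h^* =: \mu$. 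When $o_i \notin [z^{(1)}, z^{(2)}]$ both mixture atoms hit the same extreme bound and the adversary payoff is at most $2\mu - (g^* + h^*)$; the membership $\Pbf \in \cen{\rv}$ forces $\tfrac{g^*}{\alpha} + \tfrac{h^*}{\beta} \geq 1$ by \Cref{def:concave-envelope}, equivalently $\mu \leq \tfrac{\alpha \beta}{\alpha + \beta} \leq \tfrac{\alpha + \beta}{4}$ by AM-GM, which closes this case. When $o_i \in (z^{(1)}, z^{(2)})$, the mixture's bound splits into $\lambda(\alpha - g(z^{(1)}))$ and $(1-\lambda)(\beta - h(z^{(2)}))$, and here concavity of the chord $\Pbf_1 \Pbf_2$ (which is a facet of $\cen{\rv}$) together with the equidistance $\alpha - g^* = \beta - h^*$ closes the gap; this is the pentagonal positive region of \Cref{fig:pos-region}.

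The main obstacle is this final geometric step. Two design features of the algorithm are indispensable: mixing over $\cen{\rv}$ rather than over the raw curve $\rv$, so that chord-based bounds control the interior adversary choices $o_i \in (z^{(1)}, z^{(2)})$; and placing $\Pbf$ on the symmetric line $\alpha - g' = \beta - h'$, so that the two extreme adversary payoffs are balanced and their maximum is matched exactly by $\tfrac{1}{2}(g^* + h^*)$. Together these produce a value-zero equilibrium at each iteration, and hence the $\tfrac{1}{2}$-approximation.
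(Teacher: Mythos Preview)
Your telescoping reduction, the easy case $z_u \le z_l$, and the ``outside'' sub-case $o_i \notin [z^{(1)}, z^{(2)}]$ are all fine; in fact your outside argument via $\mu \le \tfrac{\alpha\beta}{\alpha+\beta} \le \tfrac{\alpha+\beta}{4}$ is a clean direct computation that bypasses the paper's pentagon geometry there.

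The gap is in the interior sub-case $o_i \in (z^{(1)}, z^{(2)})$. You coarsen the weak-DR bound $L \le g(o_i) - g(\za_i)$ (resp.\ $h(o_i)-h(\za_i)$) to $L \le \alpha - g(\za_i)$ (resp.\ $\beta - h(\za_i)$) \emph{before} taking expectations, and then write the mixture bound as $\lambda(\alpha - g_1) + (1-\lambda)(\beta - h_2)$. This is too loose to close. Concretely, take $\alpha=\beta=1$ and let the curve $\rv$ be the straight segment from $(0,1)$ to $(1,0)$; then $\Pbf_1=(0,1)$, $\Pbf_2=(1,0)$, $\lambda=\tfrac12$, $g^*=h^*=\tfrac12$, and your bound gives
\[
2\bigl[\tfrac12(\alpha-g_1)+\tfrac12(\beta-h_2)\bigr]=2\bigl[\tfrac12\cdot 1 + \tfrac12\cdot 1\bigr]=2 \;>\; 1 = g^*+h^*,
\]
so the charging inequality does \emph{not} follow. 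The sentence ``concavity of the chord $\Pbf_1\Pbf_2$ \dots closes the gap'' gestures at the right idea but cannot be applied after you have already discarded $g(o_i),h(o_i)$.

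What actually works (and what the paper does) is to keep the adversary's point $(g(o_i),h(o_i))$ in the payoff, i.e.\ use $\max(g(o_i)-g(\za_i),\,h(o_i)-h(\za_i))$ rather than $\max(\alpha-g(\za_i),\,\beta-h(\za_i))$. One then shows that the \texttt{ADV} positive region for the mixture is a pentagon whose ``corner-cutting'' face $\mathcal Q_1\mathcal Q_2$ lies above the chord $\Pbf_1\Pbf_2$; since the chord is a facet of $\cen{\rv}$, the curve $\rv$ in the interior strip lies below the chord and hence inside the pentagon. In the degenerate example above this becomes: for $o_i$ interior, with probability $\tfrac12$ the loss is $g(o_i)$ and with probability $\tfrac12$ it is $h(o_i)$, so $\Ex{L}\le \tfrac12(g(o_i)+h(o_i))\le \tfrac12$ using $g(o_i)+h(o_i)\le 1$ on the chord --- tight, whereas your bound overshot by a factor of~$2$. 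To fix your proof, retain $g(o_i),h(o_i)$ in the interior bound and use the chord inequality (equivalently the monotonicity of $h-g$ along $z$, which identifies the interior strip with the region between the lines of slope~$1$ through $\Pbf_1$ and $\Pbf_2$).
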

\subsection{Analysis of the Continuous Randomized Bi-Greedy (proof of \Cref{thm:vanilla-alg-weak})}
\label{sec:analysis-bigreedy}

We start by defining these vectors, used in our analysis in the same spirit as \cite{buchbinder2015tight}: 
\begin{align*}
&i\in[n]:~\Xbf^{(i)}\triangleq(\za_1,\ldots,\za_i,0,0,\ldots,0),~~~~~~\Xbf^{(0)}\triangleq(0,\ldots,0)\\
&i\in[n]:~\Ybf^{(i)}\triangleq(\za_1,\ldots,\za_i,1,1,\ldots,1),~~~~~~\Ybf^{(0)}\triangleq(1,\ldots,1)\\
&i\in[n]:~\Obf^{(i)}\triangleq(\za_1,\ldots,\za_i,\xo_{i+1},\ldots,\xo_{n}),~~~\Obf^{(0)}\triangleq(\xo_1,\ldots,\xo_n)
\end{align*}
Note that $\Xbf^{(i)}$ and $\Ybf^{(i)}$ (or $\Xbf^{(i-1)}$ and $\Ybf^{(i-1)}$) are the values of $\Xbf$ and $\Ybf$ at the end of (or at the beginning of) the $i^{\textrm{th}}$ iteration of \Cref{alg:bigreedy}. In the remainder of this section, we give the high-level proof ideas and present some proof sketches. See the supplementary materials for the formal proofs.

\subsubsection{{Reduction to coordinate-wise zero-sum games.}} For each coordinate $i\in[n]$, we consider a sub-problem. In particular, define a two-player \emph{zero-sum game} played between the \emph{algorithm player} (denoted by \texttt{ALG}) and the \emph{adversary player} (denoted by \texttt{ADV}). \texttt{ALG} selects a (randomized) strategy $\za_i\in[0,1]$, and \texttt{ADV} selects a (randomized) strategy $\xo_i\in[0,1]$. Recall the descriptions of $g(z)$ and $h(z)$ at iteration $i$ of \Cref{alg:bigreedy},:
\[
g(z)=\F(z,\Xbf_{-i}^{(i-1)})-\F(\zl,\Xbf_{-i}^{(i-1)})~~,~~h(z)=\F(z,\Ybf_{-i}^{(i-1)})-\F(\zu,\Ybf_{-i}^{(i-1)}).
\]
We now define the utility of \texttt{ALG} (negative of the utility of  \texttt{ADV}) in our zero-sum game as follows:
\begin{equation}
\label{eq:payoff}
\val(\za_i,\xo_i)\triangleq \frac{1}{2}g(\za_i)+\frac{1}{2}h(\za_i)-\max\left(g(\xo_i)-g(\za_i),h(\xo_i)-h(\za_i)\right).
\end{equation} 
Suppose the expected utility of \texttt{ALG} is non-negative at the equilibrium of this game. In particular, suppose \texttt{ALG}'s randomized strategy $\za_i$ (in \Cref{alg:bigreedy}) guarantees that for every strategy $\xo_i$ of \texttt{ADV} the expected utility of \texttt{ALG} is non-negative. If this statement holds for all of the zero-sum games corresponding to different iterations $i\in[n]$, then \Cref{alg:bigreedy} is a $\frac{1}{2}$-approximation of the optimum.

\begin{lemma}
\label{lem:reduction-to-zero-sum}
If $\forall i\in[n]: \Ex{\val(\za_i,\xo_i)}\geq -\delta/n$ for constant $\delta>0$, then $2\Ex{\F(\mathbf{\za})}\geq \F(\xob)-\delta$. 
\end{lemma}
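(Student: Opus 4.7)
\emph{Plan.} The approach is a potential / telescoping argument in the spirit of \cite{buchbinder2015tight}, wired through the game utility $\val$. Concretely, I will first establish the per-iteration deterministic inequality
\[
\val(\za_i,\xo_i) \leq \tfrac{1}{2}\bigl(\F(\Xbf^{(i)}) - \F(\Xbf^{(i-1)})\bigr) + \tfrac{1}{2}\bigl(\F(\Ybf^{(i)}) - \F(\Ybf^{(i-1)})\bigr) - \bigl(\F(\Obf^{(i-1)}) - \F(\Obf^{(i)})\bigr),
\]
then take expectations, plug in the hypothesis $\Ex{\val(\za_i,\xo_i)} \geq -\delta/n$, and telescope over $i=1,\ldots,n$.

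\emph{Per-iteration bound.} I will combine three facts, each a direct consequence of \texttt{weak DR-SM} along coordinate $i$, using the sandwich $\Xbf_{-i}^{(i-1)} \leq \Obf_{-i}^{(i-1)} \leq \Ybf_{-i}^{(i-1)}$. First, $g(\za_i) \leq \F(\Xbf^{(i)}) - \F(\Xbf^{(i-1)})$: expanding the right side as $\F(\za_i,\Xbf_{-i}^{(i-1)}) - \F(0,\Xbf_{-i}^{(i-1)})$ reduces the claim to $\F(\zl,\Xbf_{-i}^{(i-1)}) \geq \F(0,\Xbf_{-i}^{(i-1)})$, obtained by pushing $\F(\zl,\Ybf_{-i}^{(i-1)}) - \F(0,\Ybf_{-i}^{(i-1)}) \geq 0$ (nonnegative because $\zl$ maximizes $\F(\cdot,\Ybf_{-i}^{(i-1)})$) downward to $\Xbf_{-i}^{(i-1)}$ via \texttt{weak DR-SM}. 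Second, $h(\za_i) \leq \F(\Ybf^{(i)}) - \F(\Ybf^{(i-1)})$, by the symmetric argument using that $\zu$ maximizes $\F(\cdot,\Xbf_{-i}^{(i-1)})$. Third, the regret bound $\F(\Obf^{(i-1)}) - \F(\Obf^{(i)}) \leq \max\{g(\xo_i)-g(\za_i),\ h(\xo_i)-h(\za_i)\}$: the two candidates on the right equal $\F(\xo_i,\Xbf_{-i}^{(i-1)}) - \F(\za_i,\Xbf_{-i}^{(i-1)})$ and $\F(\xo_i,\Ybf_{-i}^{(i-1)}) - \F(\za_i,\Ybf_{-i}^{(i-1)})$ respectively (the $\zl,\zu$ additive constants cancel), and \texttt{weak DR-SM} applied in the direction $\xo_i \geq \za_i$ (resp.\ $\xo_i \leq \za_i$) sandwiches the change on $\Obf_{-i}^{(i-1)}$ by the first (resp.\ second) candidate. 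Adding the first two facts scaled by $\tfrac{1}{2}$ and subtracting the third yields the per-iteration bound.

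\emph{Telescoping.} Taking expectations, applying the hypothesis term by term, and summing over $i=1,\ldots,n$, the three sequences telescope. Using $\Xbf^{(n)} = \Ybf^{(n)} = \Obf^{(n)} = \mathbf{\za}$, $\Xbf^{(0)} = \vec{0}$, $\Ybf^{(0)} = \vec{1}$, and $\Obf^{(0)} = \xob$ gives
\[
-\delta \leq 2\Ex{\F(\mathbf{\za})} - \tfrac{1}{2}\F(\vec{0}) - \tfrac{1}{2}\F(\vec{1}) - \F(\xob),
\]
which rearranges to $2\Ex{\F(\mathbf{\za})} \geq \F(\xob) - \delta$ after invoking nonnegativity $\F(\vec{0}), \F(\vec{1}) \geq 0$.

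\emph{Main obstacle.} The one nontrivial step is the regret bound: what makes it clean is precisely that the reference points $\zl,\zu$ cancel on both sides of the $\max$, so \texttt{weak DR-SM} can be applied in its basic single-coordinate form. The gain bounds similarly rely on $\zl,\zu$ being coordinate-wise maximizers of $\F(\cdot,\Ybf_{-i}^{(i-1)})$ and $\F(\cdot,\Xbf_{-i}^{(i-1)})$ respectively — which is exactly why the algorithm selects those particular reference points.
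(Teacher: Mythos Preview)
Your proof is correct and follows essentially the same route as the paper: the same three per-iteration inequalities (the two gain bounds via \texttt{weak DR-SM} plus the maximizer property of $\zl,\zu$, and the regret bound via the case split on the sign of $\xo_i-\za_i$), followed by the same telescoping. The only cosmetic difference is that in the final step you invoke $\F(\vec{0})\geq 0$ and $\F(\vec{1})\geq 0$ separately, whereas the paper uses only the weaker combined condition $\F(\vec{0})+\F(\vec{1})\geq 0$; both are valid under the stated nonnegativity assumption, but the paper's version is what enables the footnoted relaxation.
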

\begin{proof}[Proof sketch.]
Our bi-greedy approach, \'a la \cite{buchbinder2015tight}, revolves around analyzing the evolving values of three points: $\Xbf^{(i)}$, $\Ybf^{(i)}$, and $\Obf^{(i)}$. These three points begin at all-zeroes, all-ones, and the optimum solution, respectively, and converge to the algorithm's final point. In each iteration, we aim to relate the total increase in value of the first two points with the decrease in value of the third point. If we can show that the former quantity is at least twice the latter quantity, then a telescoping sum proves that the algorithm's final choice of point scores at least half that of optimum.
  
  The utility of our game is specifically engineered to compare the total increase in value of the first two points with the decrease in value of the third point. The positive term of the utility is half of this increase in value, and the negative term is a bound on how large in magnitude the decrease in value may be. As a result, an overall nonnegative utility implies that the increase beats the decrease by a factor of two, exactly the requirement for our bi-greedy approach to work. Finally, an additive slack of $\delta / n$ in the utility of each game sums over $n$ iterations for a total  slack of $\delta$.
\end{proof}

\begin{proof}[\textbf{Proof of \Cref{lem:reduction-to-zero-sum}.}]
Consider a realization of $\za_i$ where $\za_i\geq \xo_i$. We have:
\begin{align}
\F(\Obf^{(i-1)})-\F(\Obf^{(i)})&=\F(\za_1,\ldots,\za_{i-1},\xo_i,\xo_{i+1},\ldots,\xo_{n})-\F(\za_1,\ldots,\za_{i-1},\za_i,\xo_{i+1},\ldots,\xo_{n})\nonumber\\
&\leq \F(\za_1,\ldots,\za_{i-1},\xo_i,1,\ldots,1)-\F(\za_1,\ldots,\za_{i-1},\za_i,1,\ldots,1)\nonumber\\
&=\left(\F(\xo_i,\Ybf_{-i}^{(i-1)})-\F(\zu,\Ybf_{-i}^{(i-1)})\right)-\left(\F(\za_i,\Ybf_{-i}^{(i-1)})-\F(\zu,\Ybf_{-i}^{(i-1)})\right)\nonumber\\
&=h(\xo_i)-h(\za_i),\label{eq:opt-upper-1}
\end{align}
where the inequality holds due to \texttt{weak DR-SM}. Similarly, for a a realization of $\za_i$ where $\za_i\leq \xo_i$: 
\begin{align}
\F(\Obf^{(i-1)})-\F(\Obf^{(i)})&=\F(\za_1,\ldots,\za_{i-1},\xo_i,\xo_{i+1},\ldots,\xo_{n})-\F(\za_1,\ldots,\za_{i-1},\za_i,\xo_{i+1},\ldots,\xo_{n})\nonumber\\
&\leq \F(\za_1,\ldots,\za_{i-1},\xo_i,0,\ldots,0)-\F(\za_1,\ldots,\za_{i-1},\za_i,0,\ldots,0)\nonumber\\
&=\left(\F(\xo_i,\Xbf_{-i}^{(i-1)})-\F(\zl,\Xbf_{-i}^{(i-1)})\right)-\left(\F(\za_i,\Xbf_{-i}^{(i-1)})-\F(\zl,\Xbf_{-i}^{(i-1)})\right)\nonumber\\
&=g(\xo_i)-g(\za_i)\label{eq:opt-upper-2}
\end{align}
Putting \cref{eq:opt-upper-1} and \cref{eq:opt-upper-2} together, for every realization $\za_i$ we have: 
\begin{equation}
F(\Obf^{(i-1)})-\F(\Obf^{(i)})\leq \max\left(g(\xo_i)-g(\za_i),h(\xo_i)-h(\za_i)\right)\label{eq:telescope1}
\end{equation}
Moreover, consider the term $\F(\Xbf^{(i)})-\F(\Xbf^{(i-1)})$. We have: 
\begin{align}
\F(\Xbf^{i})-\F(\Xbf^{(i-1)})&=\F(\za_1,\ldots,\za_{i-1},\za_i,0,\ldots,0)-\F(\za_1,\ldots,\za_{i-1},0,0,\ldots,0)\nonumber\\
& =g(\za_i)-g(0)=g(\za_i)+\F(\zl,\Xbf_{-i}^{(i-1)})-\F(\Xbf^{(i-1)})\nonumber\\
&\geq g(\za_i)+\F(\zl,\Ybf_{-i}^{(i-1)})-\F(0,\Ybf_{-i}^{(i-1)})\geq g(\za_i)\label{eq:telescope2}
\end{align}
where the first inequality holds due to \texttt{weak DR-SM} property and the second inequity holds as $\zl\in\underset{z\in[0,1]}{\argmax}~\F(z,\Ybf_{-i}^{(i-1)})$. Similarly, consider the term $\F(\Ybf^{(i)})-\F(\Ybf^{(i-1)})$. We have:
\begin{align}
\F(\Ybf^{(i)})-\F(\Ybf^{(i-1)})&=\F(\za_1,\ldots,\za_{i-1},\za_i,1,\ldots,1)-\F(\za_1,\ldots,\za_{i-1},1,1,\ldots,1)\nonumber\\
& =h(\za_i)-h(1)=h(\za_i)+\F(\zu,\Ybf_{-i}^{(i-1)})-\F(\Ybf^{(i-1)})\nonumber\\
&\geq h(\za_i)+\F(\zu,\Xbf_{-i}^{(i-1)})-\F(1,\Xbf_{-i}^{(i-1)})\geq h(\za_i)\label{eq:telescope3}
\end{align}
where the first inequality holds due to \texttt{weak DR-SM} and the second inequity holds as $\zu\in\underset{z\in[0,1]}{\argmax}~\F(z,\Xbf_{-i}^{(i-1)})$. By \cref{eq:telescope1}, \cref{eq:telescope2}, \cref{eq:telescope3}, and the fact that $\F(\mathbf{0})+\F(\mathbf{1})\geq 0$, we have:
\begin{align*}
0&\leq \displaystyle\sum_{i=1}^{n}{\Ex{\val(\za_i,\xo_i)}}= \displaystyle\sum_{i=1}^{n}\left({\frac{1}{2}\Ex{g(\za_i)}+\frac{1}{2}\Ex{h(\za_i)}-\Ex{\max\left(g(\xo_i)-g(\za_i),h(\xo_i)-h(\za_i)\right)}}\right)\\
&\leq \frac{1}{2}\displaystyle\sum_{i=1}^{n}\left( \F(\Xbf^{(i)})-\F(\Xbf^{(i-1)})\right)+ \frac{1}{2}\displaystyle\sum_{i=1}^{n} \left(\F(\Ybf^{(i)})-\F(\Ybf^{(i-1)})\right)- \displaystyle\sum_{i=1}^{n}\left( \F(\Obf^{(i-1)})-\F(\Obf^{(i)})\right)\\
&=\frac{\F(\Xbf^{(n)})-\F(\Xbf^{(0)})}{2}+\frac{\F(\Ybf^{(n)})-\F(\Ybf^{(0)})}{2}-\F(\Obf^{(0)})+\F(\Obf^{(n)})\\
&\leq \frac{\F(\mathbf{\za})}{2}+ \frac{\F(\mathbf{\za})}{2}-\F(\xob)+\F(\mathbf{\za})=2\F(\mathbf{\za})-\F(\xob) \qedhere
\end{align*}\phantom\qedhere
\end{proof}

\subsubsection{{Analyzing the zero-sum games.}} Fix an iteration $i\in [n]$ of \Cref{alg:bigreedy}. We then have the following.

\begin{proposition}
\label{prop:game-analysis}
 If \texttt{ALG} plays the (randomized) strategy $\za_i$ as described in \Cref{alg:bigreedy}, then we have $ \Ex{\val(\za_i,\xo_i)}\geq 0$  against any strategy $\xo_i$ of \texttt{ADV}.
\end{proposition}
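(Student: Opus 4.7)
The plan is to show $\Ex{\val(\za_i, \xo_i)} \geq 0$ against every pure $\xo_i \in [0,1]$. The degenerate case $\zu \leq \zl$ is immediate: applying \texttt{weak DR-SM} to $\Xbf_{-i}^{(i-1)} \leq \Ybf_{-i}^{(i-1)}$ on $[\zu, \zl]$ yields $\F(\zl, \Xbf_{-i}^{(i-1)}) - \F(\zu, \Xbf_{-i}^{(i-1)}) \geq \beta \geq 0$, while $\zu \in \argmax \F(\cdot, \Xbf_{-i}^{(i-1)})$ forces the same difference to be $\leq 0$, so $\alpha = \beta = 0$; then $g(\xo_i) \leq g(\zu) = 0$ and $h(\xo_i) \leq h(\zl) = 0$ for every $\xo_i$, and the deterministic choice $\za_i = \zl$ gives $\val(\zl, \xo_i) = -\max(g(\xo_i), h(\xo_i)) \geq 0$.

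For the main case $\zu > \zl$, set $(A, B) \triangleq (g(\xo_i), h(\xo_i))$ and $(G, H) \triangleq \lambda P_1 + (1-\lambda) P_2$, so that $H - G = \beta - \alpha$ by construction and
\[
\Ex{\val(\za_i, \xo_i)} \;=\; \tfrac{G+H}{2} - \phi(A, B),
\quad \phi(A, B) \;\triangleq\; \lambda \max(A - g_1, B - h_1) + (1-\lambda)\max(A - g_2, B - h_2).
\]
The backbone is an \emph{envelope-geometry bound}: because $P = (G, H)$ sits on $\cen{\rv}$, which lies above the straight segment from $(0, \beta)$ to $(\alpha, 0)$, and $H - G = \beta - \alpha$, one checks $G \geq \alpha^2/(\alpha + \beta)$; the identity $(\alpha - \beta)^2 \geq 0$ then yields $G \geq (3\alpha - \beta)/4$ and symmetrically $H \geq (3\beta - \alpha)/4$, i.e., $\alpha - G \leq \tfrac{G+H}{2}$ and $\beta - H \leq \tfrac{G + H}{2}$.

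Armed with this, I split admissibility into three sub-cases. If $\xo_i \leq \zl$, \texttt{weak DR-SM} together with $\zl \in \argmax \F(\cdot, \Ybf_{-i}^{(i-1)})$ forces $A \leq 0$, hence (using $g(\za_i) \geq 0$) $\phi(A, B) \leq \beta - H \leq \tfrac{G+H}{2}$. Symmetrically, $\xo_i \geq \zu$ gives $B \leq 0$ and $\phi(A, B) \leq \alpha - G \leq \tfrac{G+H}{2}$. For $\xo_i \in [\zl, \zu]$, $(A, B) \in \rv$; introducing $v \triangleq B - A$ and $d_j \triangleq h_j - g_j$ with WLOG $g_1 \leq g_2$, $h_1 \geq h_2$ (so $d_1 \geq \beta - \alpha \geq d_2$), the cases $v \leq d_2$ and $v \geq d_1$ reduce $\phi$ to $A - G \leq \alpha - G$ and $B - H \leq \beta - H$ respectively. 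The remaining subcase $v \in [d_2, d_1]$ is the crux: here $\phi(A, B) = \lambda(A - g_1) + (1-\lambda)(B - h_2)$ is linear with nonnegative coefficients in $(A, B)$, hence maximized over $\{v \in [d_2, d_1],\ c_1 A + c_2 B \leq C\}$ at one of $P_1$ or $P_2$. The supporting-hyperplane inequality $c_1 A + c_2 B \leq C$ (with $c_1 = h_1 - h_2$, $c_2 = g_2 - g_1$, $C = g_2 h_1 - g_1 h_2$) holds because the line through $P_1, P_2$ lies above the concave envelope $\cen{\rv}$ by concavity, and $(A, B) \in \rv$ lies at or below $\cen{\rv}$. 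A direct computation gives $\phi(P_1) = h_1 - H$ and $\phi(P_2) = g_2 - G$, both bounded by $\tfrac{G+H}{2}$ via $h_1 \leq \beta$, $g_2 \leq \alpha$, and the envelope-geometry bound. The main obstacle is this central subcase---the ``pentagon argument'' illustrated in \Cref{fig:pos-region}: one must show that the curve $\rv$ avoids \texttt{ADV}'s positive region, and this avoidance is certified jointly by the envelope-geometry bound and the supporting hyperplane at the edge $P_1 P_2$.
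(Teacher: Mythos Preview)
Your proof is correct and follows essentially the same route as the paper's: the same envelope-geometry bound $G\ge \alpha^2/(\alpha+\beta)$, $H\ge \beta^2/(\alpha+\beta)$ (equivalently $\tfrac32 G+\tfrac12 H\ge \alpha$, $\tfrac32 H+\tfrac12 G\ge \beta$) is the engine in both, and your three-way split on $v=B-A$ versus $d_1,d_2$ is exactly the paper's pentagon face structure $(\mathcal M_1\mathcal Q_1)$, $(\mathcal Q_1\mathcal Q_2)$, $(\mathcal Q_2\mathcal M_2)$ rewritten algebraically.

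Two presentation differences are worth noting. First, your easy case $\zu\le \zl$ is cleaner than the paper's: you observe directly that weak DR-SM combined with the two argmax conditions forces $\alpha=\beta=0$, whence $g(\xo_i)\le 0$ and $h(\xo_i)\le 0$ for every $\xo_i$; the paper instead does a two-subcase bound on $g(\zl)-g(\xo_i)$. Second, in the hard case the paper isolates the pentagon characterization as a standalone lemma (\Cref{lem:pos-region}) and then argues pictorially that the envelope fits inside; you bypass the pentagon and instead cast the middle subcase as a tiny LP over the strip $d_2\le v\le d_1$ intersected with the supporting half-plane of $\text{conv}(\rv)$ at the edge $P_1P_2$, whose vertices are precisely $P_1,P_2$. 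Both reach the same endpoint computations $\phi(P_1)=h_1-H$, $\phi(P_2)=g_2-G$. One small wording caveat: your justification ``$(A,B)\in\rv$ lies at or below $\cen{\rv}$'' is the right idea but only literally meaningful when $A\in[0,\alpha]$; the robust statement (which you also invoke) is that $P_1P_2$ is a face of $\text{conv}(\rv)$ itself, so the supporting inequality $c_1A+c_2B\le C$ holds for every point of $\rv$ regardless of sign.
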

\begin{proof}[Proof of \Cref{prop:game-analysis}] We do the proof by case analysis over two cases:

\paragraph{$\square~$ Case $\mathbf{\zl\geq \zu}$ (\textit{easy}):} In this case, the algorithm plays a deterministic strategy  $\za_i=\zl$. We therefore have:
\begin{equation*}
\val(\za_i,\xo_i)= \frac{1}{2}g(\za_i)+\frac{1}{2}h(\za_i)-\max\left(g(\xo_i)-g(\za_i),h(\xo_i)-h(\za_i)\right)\geq \min (g(\za_i)-g(\xo_i),0)
\end{equation*}
where the inequality holds because $g(\za_i)=g(\zl)=0$, and also $\zl\in \underset{z\in[0,1]}{\argmax}~\F(z,\Ybf_{-i}^{(i)})$ and so:
\begin{itemize}
\item  $h(\za_i)=h(\zl)=\F(\zl,\Ybf_{-i}^{(i)})-\F(\zu,\Ybf_{-i}^{(i)})\geq 0$
\item $h(\xo_i)-h(\za_i)=\F(\xo_i,\Ybf_{-i}^{(i-1)})-\F(\zl,\Ybf_{-i}^{(i-1)})\leq 0$
\end{itemize}
To complete the proof for this case, it is only remained to show $g(\za_i)-g(\xo_i)\geq 0$. As $\zl\geq \zu$, for any given $\xo_i\in[0,1]$ either $\xo_i\leq \zl$ or $\xo_i\geq \zu$ (or both). If $\xo_i\leq \zl$ then:
\begin{align*}
g(\za_i)-g(\xo_i)=-g(\xo_i)=\F(\zl,\Xbf^{(i-1)}_{-i})-\F(\xo_i,\Xbf^{(i-1)}_{-i})\geq \F(\zl,\Ybf^{(i-1)}_{-i})-\F(\xo_i,\Ybf^{(i-1)}_{-i})\geq 0
\end{align*}
where the first inequality uses \texttt{weak DR-SM} property and the second inequality uses the fact $\zl\in \underset{z\in[0,1]}{\argmax}~\F(z,\Ybf_{-i}^{(i)})$. If $\xo_i\leq \zu$, we then have:
\begin{align*}
g(\za_i)-g(\xo_i)&=\F(\zl,\Xbf^{(i-1)}_{-i})-\F(\xo_i,\Xbf^{(i-1)}_{-i})\\
&= \F(\zl,\Xbf^{(i-1)}_{-i})-\F(\zu,\Xbf^{(i-1)}_{-i})+\F(\zu,\Xbf^{(i-1)}_{-i})-\F(\xo_i,\Xbf^{(i-1)}_{-i})\\
&\geq \left(\F(\zl,\Ybf^{(i-1)}_{-i})-\F(\zu,\Ybf^{(i-1)}_{-i})\right)+\left(\F(\zu,\Xbf^{(i-1)}_{-i})-\F(\xo_i,\Xbf^{(i-1)}_{-i})\right)\geq0
\end{align*}
where the first inequality uses \texttt{weak DR-SM} property and the second inequality holds because both terms are non-negative, following the fact that:
\[
\zl\in \underset{z\in[0,1]}{\argmax}~\F(z,\Ybf_{-i}^{(i)})~~~~~~~~ \textrm{and}~~~~~~~~\zu\in \underset{z\in[0,1]}{\argmax}~\F(z,\Xbf_{-i}^{(i)})\]
 Therefore, we finish the proof of the easy case.

\paragraph{$\square~$ Case $\mathbf{\zl< \zu}$~(\textit{hard}):} In this case, \texttt{ALG} plays a mixed strategy over two points. To determine the two-point support, it considers the curve $\rv=\{(g(z),h(z))\}_{z\in[\zl,\zu]}$ and finds a point $\Pbf$ on $\cen{\rv}$ (\ie\Cref{def:concave-envelope}) that lies on the line $h'-\beta=g'-\alpha$, where recall that $\alpha=g(\zu)\geq 0$ and $\beta=g(\zl)\geq 0$ (as $\zu$ and $\zl$ are the maximizers of $\F(z,\Xbf_{-i}^{(i-1)})$ and  $\F(z,\Ybf_{-i}^{(i-1)})$ respectively). Because this point is on the concave envelope it should be a convex combination of two points on the curve $\rv(z)$.  Lets say $\Pbf=\lambda\Pbf_1+(1-\lambda)\Pbf_2$, where $\Pbf_1=\rv(z^{(1)})$ and $\Pbf_2=\rv(z^{(2)})$, and $\lambda\in[0,1]$. The final strategy of \texttt{ALG} is a mixed strategy over $\{z^{(1)},z^{(2)}\}$ with probabilities $(\lambda,1-\lambda)$. Fixing any mixed strategy of \texttt{ALG} over two points $\Pbf_1=(g_1,h_1)$ and $\Pbf_2=(g_2,h_2)$ with probabilities $(\lambda,1-\lambda)$ (denoted by $\mathcal{F_{\Pbf}}$), define the \texttt{ADV}'s \emph{positive region}, i.e.
\[
(g',h')\in[-1,1]\times[-1,1]:~~ \mathbf{E}_{(g,h)\sim F_{\Pbf}}\left[{\frac{1}{2}g+\frac{1}{2}h-\max(g'-g,h'-h)}\right]\geq 0. 
\]
Now, suppose \texttt{ALG} plays a mixed strategy with the property that its corresponding \texttt{ADV}'s positive region covers the entire curve $\{g(z),h(z)\}_{z\in[0,1]}$. Then, for any strategy $\xo_i$ of  \texttt{ADV} the expected utility of \texttt{ALG} is non-negative. In the rest of the proof, we geometrically characterize the
\texttt{ADV}'s positive region against a mixed strategy of \texttt{ALG} over a 2-point support, and then we show for the particular choice of $\Pbf_1$, $\Pbf_2$ and $\lambda$ in \Cref{alg:bigreedy} the positive region covers the entire curve $\{g(z),h(z)\}_{z\in[0,1]}$.


\begin{lemma} 
\label{lem:pos-region}
Suppose \texttt{ALG} plays a 2-point mixed strategy over $\Pbf_1=\rv(z^{(1)})=(g_1,h_1)$ and $\Pbf_2=\rv(z^{(1)})=(g_2,h_2)$ with probabilities $(\lambda,1-\lambda)$, and w.l.o.g. $h_1-g_1\geq h_2-g_2$. Then \texttt{ADV}'s positive region is the pentagon  $(\mathcal{M}_0,\mathcal{M}_1,\mathcal{Q}_1,\mathcal{Q}_2,\mathcal{M}_2)$, where $\mathcal{M}_0=(-1,-1)$ and (see \Cref{fig:pos-region}):
\begin{enumerate}
\item $\mathcal{M}_1=\left(-1,\lambda(\frac{3}{2}h_1+\frac{1}{2}g_1)+(1-\lambda)(\frac{3}{2}h_2+\frac{1}{2}g_2)\right)$,
\item $\mathcal{M}_2=\left(\lambda(\frac{3}{2}g_1+\frac{1}{2}h_1)+(1-\lambda)(\frac{3}{2}g_2+\frac{1}{2}h_2),-1\right)$,
\item $\mathcal{Q}_1$ is the intersection of the lines leaving $\Pbf_1$ with slope $1$ and leaving $\mathcal{M}_1$ along the g-axis,  
\item $\mathcal{Q}_2$ is the intersection of the lines leaving $\Pbf_2$ with slope $1$  and leaving $\mathcal{M}_2$ along the h-axis.
\end{enumerate}
\end{lemma}
\begin{proof}[\textbf{Proof of \Cref{lem:pos-region}.}]
 We start by a technical lemma, showing a single-crossing property of the g-h curve of a \texttt{weak DR} submodular function $\F(.)$, and we then characterize the region using this lemma.
\begin{lemma}\label{lem:single-crossing}
The univariate function $d(z)=h(z)-g(z)$ is monotone non-increasing. 
\end{lemma}
\begin{proof}
By using \texttt{weak DR-SM} property of $\F(.)$ the proof is immediate, as for any $\delta\geq 0$,
$$d(z+\delta)-d(z)=\left(\F(z+\delta,\Ybf^{(i-1)}_{-i})-\F(z,\Ybf^{(i-1)}_{-i})\right)-(F(z+\delta,\Xbf^{(i-1)}_{-i})-F(z+\delta,\Xbf^{(i-1)}_{-i}))\leq 0,$$
where the inequality holds due to the fact that $\Ybf^{(i-1)}_{-i}\geq \Xbf^{(i-1)}_{-i}$ and $\delta\geq 0$.
\end{proof}
Being equipped with \Cref{lem:single-crossing}, the positive region is the set of all points $(g',h')\in[-1,1]^2$ such that 
\begin{align*}
&\mathbf{E}_{(g,h)\sim F_{\Pbf}}\left[{\frac{1}{2}g+\frac{1}{2}h-\max(g'-g,h'-h)}\right]\\
&=\lambda \left(\frac{1}{2}g_1+\frac{1}{2}h_1-\max(g'-g_1,h'-h_1)\right)+(1-\lambda)\left( \frac{1}{2}g_2+\frac{1}{2}h_2-\max(g'-g_2,h'-h_2)\right)\geq 0
\end{align*}
The above inequality defines a polytope. Our goal is to find the vertices and faces of this polytope. Now, to this end, we only need to consider three cases: 1) $h'-g'\geq h_1-g_1$, 2) $h_2-g_2\leq h'-g'\leq h_1-g_1$ and 3) $h'-g'\leq h_2-g_2$ (note that $h_1-g_1\geq h_2-g_2$). From the first and third case we get the half-spaces $h'\leq  \lambda(\frac{3}{2}h_1+\frac{1}{2}g_1)+(1-\lambda)(\frac{3}{2}h_2+\frac{1}{2}g_2)$ and $g'\leq  \lambda(\frac{3}{2}g_1+\frac{1}{2}h_1)+(1-\lambda)(\frac{3}{2}g_2+\frac{1}{2}h_2)$ respectively, that form two of the faces of the positive-region polytope. From the second case, we get another half-space, but the observation is that the transition from first case to second case happens when $h'-g'=h_1-g_1$, i.e. on a line with slope one leaving $\Pbf_1$, and transition from second case to the third case happens when $h'-g'=h_2-g_2$, i.e. on a line with slope one leaving $\Pbf_2$. Therefore, the second half-space is the region under the line connecting two points $\mathcal{Q}_1$ and $\mathcal{Q}_2$, where $\mathcal{Q}_1$ is the intersection of $h'=\lambda(\frac{3}{2}h_1+\frac{1}{2}g_1)+(1-\lambda)(\frac{3}{2}h_2+\frac{1}{2}g_2)$ and the line leaving $\Pbf_1$ with slope one (point $\mathcal{Q}_1$), and $\mathcal{Q}_2$ is the intersection of $g'=\lambda(\frac{3}{2}g_1+\frac{1}{2}h_1)+(1-\lambda)(\frac{3}{2}g_2+\frac{1}{2}h_2)$ and the line leaving $\Pbf_2$ with slope one (point $\mathcal{Q}_2$). The line segment $\mathcal{Q}_1-\mathcal{Q}_2$ defines another face of the positive region polytope, and $\mathcal{Q}_1$ and $\mathcal{Q}_2$ will be two vertices on this face. By intersecting the three mentioned half-spaces with $g'\geq -1$ and $h\geq -1$ (which define the two remaining faces of the positive region polytope), the postive region will be the polytope defined by the pentagon $(\mathcal{M}_0,\mathcal{M}_1,\mathcal{Q}_1,\mathcal{Q}_2,\mathcal{M}_2)$, as claimed (see \Cref{fig:pos-region} for a pictorial proof).
\end{proof} 

By applying \Cref{lem:pos-region}, we have the following main technical lemma. The proof is geometric and is pictorially visible in \Cref{fig:pos-region}.  This lemma finishes the proof of \Cref{prop:game-analysis}.

\begin{lemma}[\textbf{\emph{main lemma}}]
\label{lem:main-technical}
If \texttt{ALG} plays the two point mixed strategy described in \Cref{alg:bigreedy}, then for every $\xo_i\in[0,1]$ the point $(g',h')=\left(g(\xo_i),h(\xo_i)\right)$ is in the \texttt{ADV}'s positive region. 
\end{lemma}

\begin{proof}[Proof sketch.]
For simplicity assume $\zl=0$ and $\zu=1$. To understand the \texttt{ADV}'s positive region that results from playing a two-point mixed strategy by \texttt{ALG}, we consider the positive region that results from playing a one point pure strategy. When \texttt{ALG} chooses a point $(g,h)$, the positive term of the utility is one-half of its one-norm. The negative term of the utility is the worse between how much the \texttt{ADV}'s point is above \texttt{ALG}'s point, and how much it is to the right of \texttt{ALG}'s point. The resulting positive region is defined by an upper boundary $g' \le \frac32 g + \frac12 h$ and a right boundary $h' \le \frac12 g + \frac32 h$.
  
Next, let's consider what happens when we pick point $(g_1, h_1)$ with probability $\lambda$ and point $(g_2, h_2)$ with probability $(1 - \lambda)$. We can compute the expected point: let $(g_3, h_3) = \lambda(g_1, h_1) + (1 - \lambda)(g_2, h_2)$. As suggested by Lemma~\ref{lem:pos-region}, the positive region for our mixed strategy has three boundary conditions: an upper boundary, a right boundary, and a corner-cutting boundary. The first two boundary conditions correspond to a pure strategy which picks $(g_3, h_3)$. By design, $(g_3, h_3)$ is located so that these boundaries cover the entire $[-1, \alpha] \times [-1, \beta]$ rectangle. This leaves us with analyzing the corner-cutting boundary, which is the focus of Figure~\ref{fig:pos-region}. As it turns out, the intersections of this boundary with the two other boundaries lie on lines of slope $1$ extending from $(g_j, h_j)_{ j=1,2}$. If we consider the region between these two lines, the portion under the envelope (where the curve $\rv$ may lie) is distinct from the portion outside the corner-cutting boundary. However, if $\rv$ were to ever violate the corner-cutting boundary condition without violating the other two boundary conditions, it must do so in this region. Hence the resulting positive region covers the entire curve $\rv$, as desired.
\end{proof}
\begin{proof}[\textbf{Proof of \Cref{lem:main-technical}}]
First of a all, we claim any \texttt{ADV}'s strategy $\xo_i\in[0,\zl)$ (or $\xo_i\in(\zu,1]$) is weakly dominated by $\zl$ (or $\zu$) if \texttt{ALG} plays a (randomized) strategy $\za_i\in[\zl,\zu]$. To see this, if $\xo_i\in[0,\zl)$, 
\begin{align*}
&\max\left(g(\xo_i)-g(\za_i),h(\xo_i)-h(\za_i)\right)\\
&=\max\left(\F(\xo_i,\Xbf^{(i-1)}_{-i})-\F(\za_i,\Xbf^{(i-1)}_{-i}),\F(\xo_i,\Ybf^{(i-1)}_{-i})-\F(\za_i,\Ybf^{(i-1)}_{-i})\right)\\
&=\F(\xo_i,\Ybf^{(i-1)}_{-i})-\F(\za_i,\Ybf^{(i-1)}_{-i})\leq \F(\zl,\Ybf^{(i-1)}_{-i})-\F(\za_i,\Ybf^{(i-1)}_{-i})\\
&=h(\zl)-h(\za_i)\leq \max\left(g(\zl)-g(\za_i),h(\zl)-h(\za_i)\right)
\end{align*}
and therefore $\val(\za_i,\zl)\leq \val(\za_i,\xo_i)$ for any $\xo_i\in[0,\zl)$. Similarly, $\val(\za_i,\zu)\leq \val(\za_i,\xo_i)$ for any $\xo_i\in(\zu,1]$. So, without loss of generality, we can assume \texttt{ADV}'s strategy $\xo_i$ is in $[\zl,\zu]$. 

Now, consider the curve $\rv=\{(g(z),h(z)\}_{z\in[\zl,\zu]}$ as in \Cref{fig:pos-region}. \texttt{ALG}'s strategy is a 2-point mixed strategy over $\Pbf_1=(g_1,h_1)=\rv(z^{(1)})$ and $\Pbf_2=(g_2,h_2)=\rv(z^{(1)})$, where these two points are on different sides of the line $\mathcal{L}: h'-\beta=g'-\alpha$ (or both of them are on the line $\mathcal{L}$). Without loss of generality, assume $h_1-g_1\geq \beta-\alpha\geq h_2-g_2$. Note that $\rv(\zl)=(0,\beta)$ is above the line $\mathcal{L}$ and $\rv(\zl)=(\alpha,0)$ is below the line $\mathcal{L}$. So, because $h(z)-g(z)$ is monotone non-increasing due to \Cref{lem:single-crossing}, we should have $\zl\leq z^{(1)}\leq z^{(2)}\leq \zu$.

Using \Cref{lem:pos-region}, the \texttt{ADV}'s positive region is  $(\mathcal{M}_0,\mathcal{M}_1,\mathcal{Q}_1,\mathcal{Q}_2,\mathcal{M}_2)$, where $\{\mathcal{M}_j\}_{j=1,2,3}$ and $\{\mathcal{Q}_j\}_{j=1,2}$ are as described in \Cref{lem:pos-region}. The upper concave envelope $\cen{\rv}$ upper-bounds the curve $\rv$. Therefore, to show that curve $\rv$ is entirely covered by the \texttt{ADV}'s positive region, it is only enough to show its upper concave envelope $\cen{\rv}$ is entirely covered (as can also be seen from \Cref{fig:pos-region}). Lets denote the line leaving $\Pbf_j$ with slope one by $\mathcal{L}_j$ for $j=1,2$. The curve $\cen{\rv}$ consists of three parts: the part above $\mathcal{L}_1$, the part below $\mathcal{L}_2$ and the part between $\mathcal{L}_1$ and $\mathcal{L}_2$ (the last part is indeed the line segment connecting $\Pbf_1$ and $\Pbf_2$). Interestingly, the line connecting $\Pbf_1$ to $\mathcal{Q}_1$ and the line connecting $\Pbf_2$ to $\mathcal{Q}_2$ both have slope $1$. So, as it can be seen from \Cref{fig:pos-region}, if we show $\mathcal{Q}_1$ is above the line $h'=\beta$ and $\mathcal{Q}_2$ is to the right of the line $g'=\alpha$, then the $\cen{\rv}$ will entirely be covered by the positive region and we are done. To see why this holds, first note that $\lambda$ has been picked so that $\Pbf \triangleq(\Pbf_g,\Pbf_h)=\lambda\Pbf_1+(1-\lambda)\Pbf_2)$. Due to \Cref{lem:pos-region}, 
\begin{align*}
\mathcal{Q}_1,h&= \lambda(\frac{3}{2}h_1+\frac{1}{2}g_1)+(1-\lambda)(\frac{3}{2}h_2+\frac{1}{2}g_2)=\frac{3}{2}\Pbf_h+\frac{1}{2}\Pbf_g\\
\mathcal{Q}_2,g&= \lambda(\frac{3}{2}g_1+\frac{1}{2}h_1)+(1-\lambda)(\frac{3}{2}g_2+\frac{1}{2}h_2)=\frac{3}{2}\Pbf_g+\frac{1}{2}\Pbf_h
\end{align*}
Moreover, point $\Pbf=(\Pbf_g,\Pbf_h)$ dominates the point $\mathcal{C}\triangleq (\frac{\alpha^2}{\alpha+\beta},\frac{\beta^2}{\alpha+\beta}) $ coordinate-wise. This dominance is simply true because points $\mathcal{C}$ and $\Pbf$ are actually the intersections of the line $\mathcal{L}: h'-\beta=g'-\alpha$ (with slope one) with the line connecting $(0,\beta)$ to $(\alpha,0)$ and with the curve $\cen{\rv}$ respectively. As $\cen{\rv}$ upper-bounds the line connecting $(0,\beta)$ to $(\alpha,0)$, and because $\mathcal{L}$ has slope one, $\Pbf_h\geq \mathcal{C}_h=\frac{\beta^2}{\alpha+\beta}$ and $\Pbf_g\geq \mathcal{C}_g=\frac{\alpha^2}{\alpha+\beta}$. Putting all the pieces together,
\begin{align*}
\mathcal{Q}_{1,h}&\geq \frac{3}{2}\frac{\beta^2}{\alpha+\beta}+\frac{1}{2}\frac{\alpha^2}{\alpha+\beta}=\frac{\left(\alpha^2+\beta^2-2\alpha\beta\right)+2\beta^2+2\alpha\beta}{2(\alpha+\beta)}=\beta+\frac{(\alpha-\beta)^2}{2(\alpha+\beta)}\geq \beta\\
\mathcal{Q}_{2,g}&\geq \frac{3}{2}\frac{\alpha^2}{\alpha+\beta}+\frac{1}{2}\frac{\beta^2}{\alpha+\beta}=\frac{\left(\alpha^2+\beta^2-2\alpha\beta\right)+2\alpha^2+2\alpha\beta}{2(\alpha+\beta)}=\alpha+\frac{(\alpha-\beta)^2}{2(\alpha+\beta)}\geq \alpha
\end{align*}
which implies $\mathcal{Q}_1$ is above the line $h'=\beta$ and $\mathcal{Q}_2$ is to the right of the line $g'=\alpha$, as desired.
\end{proof}
\noqed
\end{proof}
\vspace{-3mm}
\subsection{Polynomial-time Implementation under Lipschitz Continuity: Overview}
\label{sec:poly-implement}

At each iteration, \Cref{alg:bigreedy} interfaces with $\F$ in two ways: (i) when performing optimization to compute $\zl, \zu$ and (ii) when computing the upper-concave envelope. In both cases, we are concerned with univariate projections of $\F$, namely $\F(z,\Xbf_{-i})$ and $\F(z,\Ybf_{-i}$. Assuming $\F$ is coordinate-wise Lipschitz continuous with constant $C>0$, we choose a small $\epsilon > 0$ and take periodic samples at $\epsilon$-spaced intervals from each one of these functions, for a total of $O(\frac{1}{\epsilon})$ samples. 

To perform task (i), we simply return the the sample which resulted in the maximum function value. Since the actual maximum is $\epsilon$-close to one of the samples, our maximum is at most an additive $\epsilon C$ lower in value. To perform task (ii), we use these samples to form an approximate $\rv(z)$ curve, denoted by $\hat{\rv}(z)$. Note that we then proceed exactly as described in \Cref{alg:bigreedy} to pick a (randomized) strategy $\za_i$ using $\hat{\rv}(z)$. Note that \texttt{ADV} can actually choose a point on the exact curve $\rv(z)$. However the point she chooses is close to one of our samples and hence is at most an additive $\epsilon C$ better in value with respect to functions $g(.)$ and $h(.)$. Furthermore, we can compute the upper-concave envelope $\hat{\rv}(z)$ in time linear in the number of samples using Graham's algorithm~\citep{graham1972efficient}. Roughly speaking, this is because we can go through the samples in order of $z$-coordinate, avoiding the sorting cost of running Graham's on completely unstructured data. Formally, we have the following proposition. For detailed implementations, see \Cref{alg:apx-opt} and \Cref{alg:apx-ce}.
\begin{proposition}
If $\F$ is coordinate-wise Lipschitz continuous with constant $C>0$, then there exists an implementation of \Cref{alg:bigreedy} that runs in time $O(n^2/\epsilon)$ and returns a (randomized) point $\mathbf{\za}$ s.t. 
\[
2 \Ex{\F(\mathbf{\za})}\geq\F(\xob)-2C\epsilon,~~~~~~~~~~\textrm{where $\xob\in\underset{\xbf\in[0,1]^n}{\argmax}~\F(\xbf)$ is the optimal solution.}
\]
\end{proposition}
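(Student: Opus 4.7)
The plan is to replace each univariate oracle inside \Cref{alg:bigreedy} with a discrete sampling implementation at resolution $\epsilon' = \epsilon/n$, and then show that the per-iteration game value degrades by at most an additive $O(C\epsilon/n)$, so that \Cref{lem:reduction-to-zero-sum} applied with $\delta = 2C\epsilon$ delivers the stated approximation guarantee while the total work remains $O(n^2/\epsilon)$.

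Concretely, at iteration $i$ I would query $\F(\cdot,\Xbf_{-i}^{(i-1)})$ and $\F(\cdot,\Ybf_{-i}^{(i-1)})$ on the uniform grid $\{0,\epsilon',2\epsilon',\ldots,1\}$, giving $K = O(n/\epsilon)$ samples per univariate function. By the $C$-Lipschitz hypothesis, the true function value on any sub-interval deviates from its endpoint samples by at most $C\epsilon' = C\epsilon/n$. I approximate $\zl,\zu$ by the best sample (an additive $C\epsilon/n$ loss in objective), form the piecewise-linear approximate curve $\hat{\rv}$ from the samples lying in $[\zl,\zu]$, compute its positive-orthant concave envelope $\cen{\hat{\rv}}$ by running Graham's scan along the already-sorted $z$-axis in linear time, locate the intersection $\Pbf$ with the line $h'-\beta = g'-\alpha$, and form the two-point mixed strategy exactly as in \Cref{alg:bigreedy}. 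Summing $O(K)$ work per iteration over $n$ iterations yields the $O(n^2/\epsilon)$ runtime.

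The analytic step is to re-examine the proof of \Cref{prop:game-analysis} with $\hat{\rv}$ replacing $\rv$. In the easy case $\zl \geq \zu$, every inequality used to show $\val(\za_i,\xo_i) \geq 0$ compares $\F$ values at grid points versus a possibly off-grid adversary choice $\xo_i$; Lipschitz continuity degrades each such comparison by at most $C\epsilon/n$, so we obtain $\Ex{\val(\za_i,\xo_i)} \geq -O(C\epsilon/n)$. In the hard case $\zl < \zu$, the strategy returned by the algorithm is a mixture over two points $\hat{\Pbf}_1,\hat{\Pbf}_2$ on $\hat{\rv}$ rather than on $\rv$, but each support point is within $O(C\epsilon/n)$ of a true-curve point in $L^\infty$. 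By \Cref{lem:pos-region} the pentagon $(\mathcal{M}_0,\mathcal{M}_1,\mathcal{Q}_1,\mathcal{Q}_2,\mathcal{M}_2)$ defining the \texttt{ADV}'s positive region is an affine function of the coordinates of the support points, so shifting each support point by $O(C\epsilon/n)$ shifts every facet of the pentagon by $O(C\epsilon/n)$. Hence the argument of \Cref{lem:main-technical} shows that the true curve $\rv$ is covered by the perturbed positive region up to slack $O(C\epsilon/n)$, i.e. $\Ex{\val(\za_i,\xo_i)} \geq -O(C\epsilon/n)$ for every adversary strategy $\xo_i$. Tuning the hidden constant in $\epsilon'$ and invoking \Cref{lem:reduction-to-zero-sum} with $\delta = 2C\epsilon$ yields $2\Ex{\F(\mathbf{\za})} \geq \F(\xob) - 2C\epsilon$.

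The main obstacle is the hard-case perturbation argument: one must verify that substituting $\hat{\rv}$ for $\rv$ both in (a) locating $\Pbf$ on the line $h'-\beta=g'-\alpha$ and (b) selecting the supporting pair $\hat{\Pbf}_1,\hat{\Pbf}_2$ only perturbs the positive region by $O(C\epsilon/n)$ in Hausdorff distance, not by amounts that blow up when $\alpha$ or $\beta$ are small. The key observation that avoids this blowup is that the only thing the argument of \Cref{lem:main-technical} uses is that $\Pbf$ dominates the point $\mathcal{C} = (\alpha^2/(\alpha+\beta),\beta^2/(\alpha+\beta))$ coordinate-wise; this dominance property survives under $O(C\epsilon/n)$ coordinate-wise perturbations of the support points, at the cost of an additive $O(C\epsilon/n)$ slack in the two inequalities $\mathcal{Q}_{1,h} \geq \beta$ and $\mathcal{Q}_{2,g} \geq \alpha$, which is precisely the slack we can absorb into the reduction. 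All remaining steps — the optimization error from approximate $\zl,\zu$, Graham's-scan correctness on sorted input, and the telescoping of the per-iteration slack into total slack $\delta$ — are routine.
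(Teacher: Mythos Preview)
Your plan is correct in outline and matches the paper's implementation: grid-sample each univariate slice at spacing $\epsilon' = \Theta(\epsilon/n)$, approximate $\zl,\zu$ by the best samples, build $\hat{\rv}$ from the samples, compute $\cen{\hat{\rv}}$ by a Graham scan along the already $z$-sorted samples, and invoke \Cref{lem:reduction-to-zero-sum} with per-iteration slack $O(C\epsilon/n)$. The runtime accounting $O(n)\cdot O(n/\epsilon)=O(n^2/\epsilon)$ is exactly right.

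Where you diverge from the paper is in the hard-case error analysis, and your route has a small gap. The paper does \emph{not} argue by perturbing the pentagon. Instead it observes that the sampled curve $\hat{\rv}$ is itself a legitimate instance for \Cref{prop:game-analysis}: it has endpoints $(0,\beta)$ and $(\alpha,0)$, it inherits the single-crossing property of \Cref{lem:single-crossing}, and the algorithm's mixture point $\hat{\Pbf}$ on $\cen{\hat{\rv}}$ still dominates $\mathcal{C}$. Hence \Cref{lem:main-technical} applied verbatim to $\hat{\rv}$ shows the positive region covers $\hat{\rv}$ \emph{exactly}, with zero slack. The only error then arises because \texttt{ADV} may play a point $(g(\xo_i),h(\xo_i))$ on the true curve $\rv$ rather than on $\hat{\rv}$; since $\val$ is $1$-Lipschitz in the adversary's $(g',h')$ coordinates and every point of $\rv$ is within $L^\infty$ distance $C\epsilon'$ of a sample, this costs at most $C\epsilon'$.

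Your perturbation argument instead tries to show the pentagon covers $\rv$ directly. The claim that ``the only thing the argument of \Cref{lem:main-technical} uses is that $\Pbf$ dominates $\mathcal{C}$'' is incomplete: the proof also uses that $\Pbf_1,\Pbf_2$ are adjacent vertices of the envelope of the curve being covered, so that the portion between the slope-$1$ lines $\mathcal{L}_1,\mathcal{L}_2$ is exactly the segment $\Pbf_1\Pbf_2$. With $\hat{\Pbf}_1,\hat{\Pbf}_2\in\cen{\hat{\rv}}$ this holds for $\hat{\rv}$ but not for $\rv$; the corresponding arc of $\cen{\rv}$ may sit strictly above that segment. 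To close your argument you would additionally need that $\cen{\rv}$ and $\cen{\hat{\rv}}$ are $O(C\epsilon')$-close in Hausdorff distance (true, since every point of $\rv$ is $C\epsilon'$-close to a sample), which is the step you omitted. Once added, your route works too, but the paper's ``analyze on $\hat{\rv}$, then pay $C\epsilon'$ for the adversary'' avoids the envelope-stability detour entirely.
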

\begin{algorithm}[ht]
\small
\caption{Approximate One-Dimensional Optimization}
\label{alg:apx-opt}
 \textbf{input}: function $f : [0,1] \rightarrow [0,1]$, additive error $\delta > 0$, Lipschitz Constant $C > 0$~\;
 \textbf{output}: coordinate value $z \in [0,1]^n$~\;
 
Set $\epsilon \leftarrow \frac{\delta}{C}$~\;
Initialize $z^* \leftarrow 0$~\;
Initialize $z \leftarrow 0$~\;
\While{$z \le 1$}{
  \If{$f(z) > f(z^*)$}{
    $z^* \leftarrow z$~\;
  }
  $z \leftarrow z + \epsilon$~\;
}
\Return{$z^*$}
\end{algorithm}

\begin{algorithm}[ht]
\small
\caption{Approximate Annotated Upper-Concave Envelope}
\label{alg:apx-ce}
 \textbf{input}: function $f : [0,1] \rightarrow [0,1]$, function $g : [0, 1] \rightarrow [0, 1]$, additive error $\delta > 0$, Lipschitz Constant $C > 0$~\;
 \textbf{output}: stacks $s$ and $t$~\;
 
Set $\epsilon \leftarrow \frac{\delta}{C}$~\;
Initialize stacks $s, t$~\;
Initialize $z \leftarrow 0$~\;
\While{$z \le 1$}{
  \If{$s$ is empty or $f(z)$ is strictly larger than the first coordinate of the the top element of $s$}{
    \While{$s$ has at least two elements and the slope from (the second-to-top element of $s$) to (the top element of $s$) is less than the slope from (the top element of $s$) to $(f(z), g(z))$}{
      Pop the top element of $s$~\;
      Pop the top element of $t$~\;
    }
    Push $(f(z), g(z))$ onto $s$~\;
    Push $z$ onto $t$~\;
  }
  $z \leftarrow z + \epsilon$~\;
}
\Return{$(s, t)$}
\end{algorithm}
\section{Strong DR-SM Maximization: Binary-Search Bi-Greedy}
\label{sec:strong}
Our second result is a fast binary search algorithm, achieving the tight $\frac{1}{2}$-approximation factor (up to additive error $\delta$) in quasi-linear time in $n$, but only for the special case of \texttt{strong DR-SM} functions (a.k.a. DR-submodular); see \Cref{def:weak-strong-DR}. This algorithm leverages the coordinate-wise concavity to identify a coordinate-wise \emph{monotone equilibrium condition}. In each iteration, it hunts for an equilibrium point by using binary search. Satisfying the equilibrium at each iteration then guarantees the desired approximation factor. Formally we propose \Cref{alg:binary-search}.
\begin{algorithm}[h]
\small
\caption{Binary-Search Continuous Bi-greedy}
\label{alg:binary-search}
  
   \textbf{input}: function $\F:[0,1]^n\rightarrow [0,1]$, error $\epsilon>0$~\;
 \textbf{output}: vector $\mathbf{\za}=(\za_1,\ldots,\za_n)\in[0,1]^n$~\;

Initialize $\Xbf\leftarrow (0,\ldots,0)$ and $\Ybf\leftarrow (1,\ldots,1)$~\;
  \For{$i = 1$ to $n$}{
  \If{$\dvx{\F}{x_i}(0,\Xbf_{-i})< 0~~\&~~\dvx{\F}{x_i}(1,\Ybf_{-i})\leq 0$}{$\za_i\leftarrow 0$}
  \ElseIf{$\dvx{\F}{x_i}(0,\Xbf_{-i})\geq 0~~\&~~\dvx{\F}{x_i}(1,\Ybf_{-i})>0$}{$\za_i\leftarrow 1$}
  \Else{ \tcp{we do binary search.}
    \While{$Y_i - X_i > {\epsilon}/{n}$}{
      Let $\za_i \leftarrow \frac{X_i + Y_i}{2}$~\;
  
      \eIf{$\dvx{\F}{x_i}(\za_i,\Xbf_{-i}) \cdot (1-\za_i) +\dvx{\F}{x_i}(\za_i,\Ybf_{-i})\cdot \za_i<0$}{
        \tcp{we need to increase $w_i$.}
        Set $X_i \leftarrow \za_i$~\;
      } {
        \tcp{we need to decrease $w_i$.}
        Set $Y_i \leftarrow \za_i$~\;
      }
    }}Let $X_i\leftarrow \za_i $ and $Y_i\leftarrow \za_i$~; ~~~~~~\tcp{after this, $\Xbf$ and $\Ybf$ will agree at coordinate $i$}
 
  }
  \Return{$\mathbf{\hat{z}}=(\za_1,\ldots,\za_n)$}
\end{algorithm}
As a technical assumption, let $\F$ be Lipschitz continuous with some constant $C>0$, so that we can relate the precision of our binary search with additive error. We arrive at the theorem, proved in \Cref{sec:strong-analysis}.

\begin{theorem}
\label{thm:alg-strong}
If $\F(.)$ is non-negative and DR-submodular (a.k.a \texttt{\emph{Strong DR-SM}}) and is coordinate-wise Lipschitz continuous with constant $C>0$, then \Cref{alg:binary-search} runs in time $O\left(n\log(\frac{n}{\epsilon})\right)$ and is a deterministic $\frac{1}{2}$-approximation algorithm up to $O(\epsilon)$ additive error, i.e. returns $\mathbf{\za}\in[0,1]^n$ s.t. 
\[
2{\F(\mathbf{\za})}\geq \F(\xob)-2C\epsilon~,~~~~~~~~~~\textrm{where $\xob\in\underset{\xbf\in[0,1]^n}{\argmax}~\F(\xbf)$ is the optimal solution.}
\]
\end{theorem}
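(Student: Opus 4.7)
The plan is to mimic the bi-greedy telescoping framework from the proof of \thmref{thm:vanilla-alg-weak} (specifically \lemref{lem:reduction-to-zero-sum}), but to exploit coordinate-wise concavity to replace the two-player zero-sum game at each coordinate with a one-dimensional \emph{monotone equilibrium condition} that can be located efficiently by binary search.

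\emph{Step 1 (well-posedness of the equilibrium subroutine).} Define
\[
\phi_i(z) \triangleq (1-z)\,\dvx{\F}{x_i}(z, \Xbf^{(i-1)}_{-i}) + z\,\dvx{\F}{x_i}(z, \Ybf^{(i-1)}_{-i}).
\]
A direct calculation using that every entry of the Hessian of $\F$ is non-positive (strong DR-SM) shows $\phi_i$ is non-increasing on $[0,1]$: the mixed term $\dvx{\F}{x_i}(z,\Ybf_{-i})-\dvx{\F}{x_i}(z,\Xbf_{-i})$ is $\leq 0$, and the weighted combination of second partials is $\leq 0$. Hence the three branches of \Cref{alg:binary-search} are exhaustive: if $\phi_i(0)<0$ then $\phi_i\leq 0$ on $[0,1]$ so $\za_i=0$ is safe; if $\phi_i(1)>0$ then $\phi_i\geq 0$ on $[0,1]$ so $\za_i=1$ is safe; otherwise $\phi_i(0)\geq 0\geq \phi_i(1)$ and continuity yields a zero, which the binary search brackets inside an interval of width $\leq \epsilon/n$.

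\emph{Step 2 (per-iteration inequality).} The heart of the proof is to show, for each $i$,
\[
\tfrac12\bigl[\F(\Xbf^{(i)}) - \F(\Xbf^{(i-1)})\bigr] + \tfrac12\bigl[\F(\Ybf^{(i)}) - \F(\Ybf^{(i-1)})\bigr] \;\geq\; \F(\Obf^{(i-1)}) - \F(\Obf^{(i)}) - \tfrac{\delta}{n},
\]
with slack $\delta=O(C\epsilon)$ absorbing the binary-search imprecision. In the interior branch, set $\mu \triangleq (1-\za_i)\,\dvx{\F}{x_i}(\za_i,\Xbf^{(i-1)}_{-i}) = -\za_i\,\dvx{\F}{x_i}(\za_i,\Ybf^{(i-1)}_{-i})$, where the equality is the equilibrium condition $\phi_i(\za_i)=0$. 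DR-submodularity forces the X-derivative to dominate the Y-derivative, so the equilibrium pins both signs: $\dvx{\F}{x_i}(\za_i,\Xbf^{(i-1)}_{-i})\geq 0\geq \dvx{\F}{x_i}(\za_i,\Ybf^{(i-1)}_{-i})$ and $\mu\geq 0$. Lower-bounding the two gains via coordinate-wise concavity at $\za_i$ and substituting the identity yields
\[
[\text{gain in }\Xbf]+[\text{gain in }\Ybf] \;\geq\; \mu\Bigl(\tfrac{\za_i}{1-\za_i}+\tfrac{1-\za_i}{\za_i}\Bigr) \;\geq\; 2\mu
\]
by AM--GM. For the loss $\F(\Obf^{(i-1)})-\F(\Obf^{(i)})$, I would split on the sign of $\xo_i-\za_i$: in each case, the concavity upper bound $f(\xo_i)-f(\za_i)\leq f'(\za_i)(\xo_i-\za_i)$ combined with DR-SM (replacing the derivative at the intermediate point $\Obf^{(i-1)}_{-i}$ by the larger X-side derivative when $\xo_i\geq \za_i$, or by the smaller Y-side derivative when $\xo_i<\za_i$), together with $|\xo_i-\za_i|\leq \max(1-\za_i,\za_i)$, bounds the loss by $\mu$. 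The two edge branches with $\za_i\in\{0,1\}$ are handled by a simpler direct case analysis: the triggering sign conditions force the gain terms to be non-negative and the loss term to be non-positive.

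\emph{Step 3 (telescoping, precision, runtime).} Telescoping the Step 2 inequality over $i=1,\ldots,n$ and using $\F(\Xbf^{(0)})+\F(\Ybf^{(0)})=\F(\vec 0)+\F(\vec 1)\geq 0$, $\F(\Xbf^{(n)})=\F(\Ybf^{(n)})=\F(\mathbf{\za})$, $\F(\Obf^{(0)})=\F(\xob)$, and $\F(\Obf^{(n)})=\F(\mathbf{\za})$, yields $2\F(\mathbf{\za})\geq \F(\xob)-\delta$, exactly as in the final step of the proof of \lemref{lem:reduction-to-zero-sum}. The binary search returns $\za_i$ within $\epsilon/n$ of a true root of $\phi_i$, and coordinate-wise $C$-Lipschitz continuity of $\F$ translates this into $O(C\epsilon/n)$ additive slack per iteration, summing to the $2C\epsilon$ bound in the theorem. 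The runtime $O(n\log(n/\epsilon))$ is immediate: each of the $n$ coordinates uses at most $\lceil\log_2(n/\epsilon)\rceil$ binary-search steps, each requiring a constant number of derivative queries. The main obstacle I anticipate is carrying out the loss-bound case analysis tightly enough that the concavity-plus-DR-SM estimate yields \emph{exactly} $\mu$, since any loose multiplicative constant there would spoil the $\tfrac12$-approximation when combined with the AM--GM step.
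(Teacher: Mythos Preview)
Your proposal is correct and follows essentially the same route as the paper's proof (\lemref{lem:reduction-to-zero-sum-strong} and \proref{prop:main-strong}): monotonicity of $\phi_i$ for well-posedness and runtime, a telescoping reduction to a per-coordinate inequality, coordinate-wise concavity to lower-bound the $\Xbf$- and $\Ybf$-gains by $\za_i\alpha$ and $(1-\za_i)\beta$, and the equilibrium identity $(1-\za_i)\alpha=\za_i\beta$ to close the gap against the loss. The only cosmetic differences are that the paper packages the per-iteration argument in the $g,h,\val_S$ game notation and writes the final slack as $\frac{(\alpha-\beta)^2}{2(\alpha+\beta)}\geq 0$, whereas you introduce $\mu=(1-\za_i)\alpha=\za_i\beta$ and invoke AM--GM; your ``obstacle'' evaporates because in each case the loss bound is exactly $(1-\za_i)\alpha$ or $\za_i\beta$, both equal to $\mu$ (just be careful to pair the bound $x_i^*-\za_i\leq 1-\za_i$ with $\alpha$ and $\za_i-x_i^*\leq \za_i$ with $\beta$, rather than taking a single $\max$).
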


\paragraph{Running time.} If we show that  $f(z)\triangleq \dvx{\F}{x_i}(z,\Xbf_{-i})(1-z) +\dvx{\F}{x_i}(z,\Ybf_{-i})z$ is monotone non-increasing in $z$, then clearly the binary search terminates in $O\left(\log(n/\epsilon)\right)$ steps (note that the algorithm only does binary search in the case when $f(0)\geq 0$ and $f(1)\leq 0$). To see the monotonicity,
$$ f'(z)=(1-z)\dvxx{\F}{x_i}(z,\Xbf_{-i})+z\dvxx{\F}{x_i}(z,\Ybf_{-i})+\left(\dvx{\F}{x_i}(z,\Ybf_{-i})-\dvx{\F}{x_i}(z,\Xbf_{-i})\right)\leq 0$$
where the inequality holds due to \texttt{strong DR-SM} and the fact that all of the Hessian entries (including diagonal) are non-positive. Hence the total running time is $O\left(n\log(n/\epsilon)\right)$.

\subsection{Analysis of the Binary-Search Bi-Greedy (proof of \Cref{thm:alg-strong})}
\label{sec:strong-analysis}
We start by the following technical lemma, which is used in various places of our analysis. The proof is immediate by \texttt{strong DR-SM} property (\Cref{def:weak-strong-DR}).
\begin{lemma}\label{lem:alpha-beta}
   For any $\mathbf{y},\mathbf{z}\in[0,1]^n$ such that $\mathbf{y}\leq\mathbf{z}$, we have $\dvx{\F}{x_i}(\mathbf{y})-\dvx{\F}{x_i}(\mathbf{z}) \geq 0, \forall i$.
\end{lemma}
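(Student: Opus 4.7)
The plan is to obtain this lemma as a direct consequence of the \texttt{strong DR-SM} property by passing from the finite-difference inequality in \Cref{def:weak-strong-DR} to partial derivatives. Fix an index $i\in[n]$ and two points $\mathbf{y}\leq \mathbf{z}$ in $[0,1]^n$. First I would invoke the strong diminishing-returns inequality with the points $\mathbf{y}$ (playing the role of $\mathbf{x}$ in the definition) and $\mathbf{z}$ (playing the role of $\mathbf{y}$ in the definition); the hypothesis $\mathbf{y}\leq\mathbf{z}$ exactly matches what the definition requires. For any $\delta>0$ small enough that $y_i+\delta$ and $z_i+\delta$ both lie in $[0,1]$, this yields
\[
\F(y_i+\delta,\mathbf{y}_{-i})-\F(\mathbf{y})\;\geq\;\F(z_i+\delta,\mathbf{z}_{-i})-\F(\mathbf{z}).
\]

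Next I would divide both sides by the positive quantity $\delta$ to convert them into forward difference quotients of $\F$ along the $i$-th coordinate at $\mathbf{y}$ and at $\mathbf{z}$. Finally I would take $\delta\to 0^+$: since $\F$ is assumed differentiable (the statement itself uses $\partial\F/\partial x_i$), these quotients converge to $\dvx{\F}{x_i}(\mathbf{y})$ and $\dvx{\F}{x_i}(\mathbf{z})$ respectively, and weak inequalities are preserved under limits. Rearranging gives the claimed inequality $\dvx{\F}{x_i}(\mathbf{y})-\dvx{\F}{x_i}(\mathbf{z})\geq 0$.

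The only point to be slightly careful about — and the closest thing to an obstacle — is the boundary case where $z_i=1$, so that $z_i+\delta>1$ is not admissible. The remedy is routine: interpret $\partial\F/\partial x_i$ at $z_i=1$ as the one-sided (left) partial derivative, and run the same limiting argument using a small backward perturbation $\delta<0$ applied symmetrically to both points, which \texttt{strong DR-SM} supports as well (by reversing $\mathbf{x}$ and $\mathbf{y}$ or by applying the definition on the shifted pair). Either way, the coordinate-wise monotonicity of $\partial\F/\partial x_i$ is the infinitesimal form of diminishing returns, and that is precisely what \Cref{lem:alpha-beta} asserts.
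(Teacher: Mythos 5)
Your proof is correct, but it takes a genuinely different route from the paper's. The paper never touches the finite-difference definition directly: it telescopes the difference $\dvx{\F}{x_i}(\mathbf{y})-\dvx{\F}{x_i}(\mathbf{z})$ over coordinates, changing $\mathbf{z}$ into $\mathbf{y}$ one coordinate at a time, writes each telescoping term as an integral $\int_{y_j}^{z_j}-\dvxy{\F}{x_i}{x_j}\,dw$, and then invokes the Hessian characterization of \texttt{strong DR-SM} (all second partials nonpositive, \Cref{prop:weak-strong-multi-defs}) to conclude each integrand is nonnegative; this argument presupposes twice differentiability. You instead apply \Cref{def:weak-strong-DR} at the pair $\mathbf{y}\leq\mathbf{z}$, divide the diminishing-returns inequality by $\delta$, and let $\delta\to 0^+$, which is more elementary: it needs only the existence of the first partial $\dvx{\F}{x_i}$ (which the lemma statement already presupposes), avoids the supplement's Hessian proposition entirely, and makes transparent that coordinate-wise monotonicity of the gradient is just the infinitesimal form of the DR inequality. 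The one place to be careful is the boundary case $z_i=1$ (and symmetrically $y_i$ near $0$ for backward increments), which you flag; the cleanest fix is to apply the definition to the shifted pair, e.g.\ with $\xv=\mathbf{y}$ and $\yv=(z_i-\delta,\mathbf{z}_{-i})$ when $y_i<z_i=1$, or to $(1-\delta,\mathbf{y}_{-i})\leq(1-\delta,\mathbf{z}_{-i})$ when $y_i=z_i=1$, and interpret the derivative at the endpoint one-sidedly --- your sketch of this is a bit loose (``reversing $\mathbf{x}$ and $\mathbf{y}$'' is not the right mechanism) but the shifted-pair variant you also mention does work, so this is a presentational rather than a substantive gap. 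In exchange for slightly more case analysis at the boundary, your approach buys a proof under weaker smoothness hypotheses; the paper's integral decomposition buys a one-line computation once twice differentiability is assumed.
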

\begin{proof}[\textbf{Proof of \Cref{lem:alpha-beta}.}] We rewrite this difference as a sum over integrals of the second derivatives:
  \begin{align*}
    \dvx{\F}{x_i}(\mathbf{y})-\dvx{\F}{x_i}(\mathbf{z})
      &= \sum_{j=1}^{n}
           \left[ \begin{aligned}
             & \dvx{\F}{x_i}(y_1, \ldots, y_{j-1}, y_j, z_{j+1}, \ldots, z_n) \\
           - & \dvx{\F}{x_i}(y_1, \ldots, y_{j-1}, z_j, z_{j+1}, \ldots, z_n)
           \end{aligned} \right]
         \\
      &= \sum_{j=1}^{n} \int_{y_j}^{z_j}
           - \dvxy{\F}{x_i}{x_j}(y_1, \ldots, y_{j-1}, w, z_{j+1}, \ldots, z_n) dw\geq 0
  \end{align*}
To see why the last inequality holds, because of the \texttt{strong DR-SM} \Cref{prop:weak-strong-multi-defs} implies that all of the second derivatives of $\F$ are always nonpositive. As $\forall i: z_i\geq y_i$, the RHS is nonnegative.
\end{proof}
\paragraph{\emph{A modified zero-sum game.}} We follow the same approach and notations as in the proof of \Cref{thm:vanilla-alg-weak} (\Cref{sec:analysis-bigreedy}). Suppose $\xob$ is the optimal solution. For each coordinate $i$ we again define a two-player zero-sum game between \texttt{ALG} and \texttt{ADV}, where the former plays $\za_i$ and the latter plays $\xo_i$. The payoff matrix for the \texttt{strong DR-SM} case, denoted by $\val_S(\za_i,\xo_i)$ is defined as before (\Cref{eq:payoff}); the only major difference is we redefine $h(.)$ and $g(.)$ to be the following functions,:
$$g(z)\triangleq\F(z,\Xbf_{-i}^{(i-1)})-\F(0,\Xbf_{-i}^{(i-1)})~~,~~h(z)\triangleq\F(z,\Ybf_{-i}^{(i-1)})-\F(1,\Ybf_{-i}^{(i-1)}).$$
Now, similar to \Cref{lem:reduction-to-zero-sum}, we have a lemma that shows how to prove the desired approximation factor using the above zero-sum game. The proof is exactly as  \Cref{lem:reduction-to-zero-sum} and is omitted for brevity.
\begin{lemma}
\label{lem:reduction-to-zero-sum-strong}
Suppose $\forall i\in[n]: \val_S(\za_i,\xo_i)\geq -\delta/n$ for constant $\delta>0$. Then $2{\F(\mathbf{\za})}\geq \F(\xob)-\delta$. 
\end{lemma}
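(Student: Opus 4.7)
The plan is to mirror the proof of \Cref{lem:reduction-to-zero-sum} almost verbatim, adapting the three telescoping inequalities to the new, re-anchored definitions $g(z) = \F(z,\Xbf_{-i}^{(i-1)}) - \F(0,\Xbf_{-i}^{(i-1)})$ and $h(z) = \F(z,\Ybf_{-i}^{(i-1)}) - \F(1,\Ybf_{-i}^{(i-1)})$. I will use the fact that \texttt{strong DR-SM} implies \texttt{weak DR-SM} (\Cref{prop:weak-strong-multi-defs}) to reuse the monotonicity arguments, and no expectation is needed because \Cref{alg:binary-search} is deterministic.

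First, I would bound the drop in the OPT-coupled point $\Obf^{(i-1)} \to \Obf^{(i)}$ exactly as in the weak case. In the realization $\za_i \ge \xo_i$, apply \texttt{weak DR-SM} to replace the tail $(\xo_{i+1},\dots,\xo_n)$ by $(1,\dots,1)$ and obtain $\F(\Obf^{(i-1)}) - \F(\Obf^{(i)}) \le \F(\xo_i,\Ybf_{-i}^{(i-1)}) - \F(\za_i,\Ybf_{-i}^{(i-1)}) = h(\xo_i) - h(\za_i)$, where the $h$-anchor at $1$ cancels cleanly. Symmetrically, when $\za_i \le \xo_i$, replacing the tail by $(0,\dots,0)$ gives $g(\xo_i) - g(\za_i)$. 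Combining, $\F(\Obf^{(i-1)}) - \F(\Obf^{(i)}) \le \max\bigl(g(\xo_i)-g(\za_i), h(\xo_i)-h(\za_i)\bigr)$.

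Second, I would relate the jumps in $\Xbf$ and $\Ybf$ to $g(\za_i)$ and $h(\za_i)$, which is now a direct identity rather than an inequality. Since $\Xbf^{(i-1)}$ has a $0$ in the $i$-th slot, $\F(\Xbf^{(i)}) - \F(\Xbf^{(i-1)}) = \F(\za_i,\Xbf_{-i}^{(i-1)}) - \F(0,\Xbf_{-i}^{(i-1)}) = g(\za_i)$; likewise $\F(\Ybf^{(i)}) - \F(\Ybf^{(i-1)}) = h(\za_i)$ because $\Ybf^{(i-1)}$ has a $1$ in slot $i$. The earlier proof needed \texttt{weak DR-SM} here only to compensate for anchoring $g,h$ at $z_l,z_u$; the re-anchoring at $0,1$ removes that step entirely, so strong DR-SM gains nothing but loses nothing.

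Finally, I would assemble the telescoping bound. Summing the hypothesis $\val_S(\za_i,\xo_i) \ge -\delta/n$ over $i \in [n]$ and plugging in the three inequalities yields
\begin{align*}
-\delta &\le \sum_{i=1}^{n} \val_S(\za_i,\xo_i)\\
&\le \tfrac{1}{2}\bigl(\F(\Xbf^{(n)}) - \F(\Xbf^{(0)})\bigr) + \tfrac{1}{2}\bigl(\F(\Ybf^{(n)}) - \F(\Ybf^{(0)})\bigr) - \bigl(\F(\Obf^{(0)}) - \F(\Obf^{(n)})\bigr)\\
&\le \tfrac{1}{2}\F(\mathbf{\za}) + \tfrac{1}{2}\F(\mathbf{\za}) - \F(\xob) + \F(\mathbf{\za}) = 2\F(\mathbf{\za}) - \F(\xob),
\end{align*}
where I use $\F(\mathbf{0}), \F(\mathbf{1}) \ge 0$ to drop $-\tfrac{1}{2}\F(\Xbf^{(0)}) - \tfrac{1}{2}\F(\Ybf^{(0)})$ and the fact that $\Xbf^{(n)} = \Ybf^{(n)} = \Obf^{(n)} = \mathbf{\za}$. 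Rearranging gives $2\F(\mathbf{\za}) \ge \F(\xob) - \delta$. The only real subtlety, and thus the one place I would double-check, is making sure \texttt{weak DR-SM} suffices for the OPT-drop inequality under the new anchoring; everything else is a direct substitution into the weak-case argument.
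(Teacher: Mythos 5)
Your proof is correct and follows exactly the route the paper intends: it declares the proof of this lemma to be the same as that of \Cref{lem:reduction-to-zero-sum}, and your adaptation (weak DR-SM, implied by strong DR-SM, for the drop in $\Obf^{(i)}$, plus the observation that the re-anchoring at $0$ and $1$ turns the $\Xbf^{(i)}$ and $\Ybf^{(i)}$ steps into exact identities) is precisely that argument carried out. No gaps.
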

\paragraph{\emph{Analyzing zero-sum games.}} We show that $\val_S(\za_i,\xo_i)$ is lower-bounded by a small constant, and then by using \Cref{lem:reduction-to-zero-sum-strong} we finish the proof. The formal proof, which appears in the supplementary materials, uses both ideas similar to those of \cite{buchbinder2015tight}, as well as new ideas on how to relate the algorithm's equilibrium condition to the value of the two-player zero-sum game. 
\begin{proposition}
\label{prop:main-strong}
if \texttt{ALG} plays the strategy $\za_i$ described in \Cref{alg:binary-search}, then $\val_S(\za_i,\xo_i)\geq -2C\epsilon/n$.
\end{proposition}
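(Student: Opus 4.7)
The argument splits along the three branches of \Cref{alg:binary-search}. Let $\phi(z)\triangleq (1-z)\,\dvx{\F}{x_i}(z,\Xbf_{-i}^{(i-1)}) + z\,\dvx{\F}{x_i}(z,\Ybf_{-i}^{(i-1)})$ denote the quantity whose sign the algorithm tests, and write $g'(z)\triangleq \dvx{\F}{x_i}(z,\Xbf_{-i}^{(i-1)})$, $h'(z)\triangleq \dvx{\F}{x_i}(z,\Ybf_{-i}^{(i-1)})$. In the two boundary branches (where $\za_i\in\{0,1\}$), I will show $\val_S(\za_i,\xo_i)\geq 0$ directly, using that coordinate-wise concavity (a consequence of strong DR-SM) combined with \Cref{lem:alpha-beta} makes $g$ and $h$ monotone on $[0,1]$. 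In the third (binary-search) branch, the plan is to first prove $\val_S(z^*,\xo_i)\geq 0$ at the \emph{exact} equilibrium $z^*\in[0,1]$ defined by $\phi(z^*)=0$ (which exists by the intermediate value theorem, since $\phi$ is continuous with $\phi(0)\geq 0\geq\phi(1)$ in this branch and is non-increasing as shown in the ``Running time'' paragraph), and then transfer the bound to the binary-search output $\za_i$ using coordinate-wise Lipschitz continuity.

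\paragraph{Boundary branches.} In the first branch, $g'(0)<0$ together with \Cref{lem:alpha-beta} yields $h'(0)\leq g'(0)<0$, and concavity of $g$ and $h$ (strong DR-SM implies nonpositive diagonal Hessian entries) then makes both non-increasing on $[0,1]$. Combined with $g(0)=0$, the max-term in $\val_S(0,\xo_i)$ is nonpositive, giving $\val_S(0,\xo_i)\geq \tfrac{1}{2}h(0)\geq 0$. The second branch is symmetric and yields $\val_S(1,\xo_i)\geq \tfrac{1}{2}g(1)\geq 0$.

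\paragraph{Binary-search branch, exact equilibrium.} By \Cref{lem:alpha-beta}, $h'(z^*)\leq g'(z^*)$; combined with $\phi(z^*)=0$ this forces $g'(z^*)\geq 0\geq h'(z^*)$. The concavity tangent upper bound then gives, for every $\xo_i\in[0,1]$,
\begin{equation*}
\max\bigl(g(\xo_i)-g(z^*),\,h(\xo_i)-h(z^*)\bigr)\leq \max\bigl((1-z^*)\,g'(z^*),\;z^*(-h'(z^*))\bigr)=M,
\end{equation*}
where the two maximands equal the common value $M$ by the equilibrium relation $(1-z^*)g'(z^*)=z^*(-h'(z^*))$. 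Applying concavity at $z^*$ together with the normalizations $g(0)=h(1)=0$ yields
\begin{equation*}
g(z^*)\geq z^*\,g'(z^*)=\tfrac{z^*M}{1-z^*},\qquad h(z^*)\geq (1-z^*)(-h'(z^*))=\tfrac{(1-z^*)M}{z^*}.
\end{equation*}
Summing and invoking the elementary identity $(2z^*-1)^2\geq 0$ (equivalently, $(z^*)^2+(1-z^*)^2\geq 2z^*(1-z^*)$) gives $g(z^*)+h(z^*)\geq 2M$, whence $\val_S(z^*,\xo_i)\geq \tfrac{1}{2}(g(z^*)+h(z^*))-M\geq 0$. Edge cases $z^*\in\{0,1\}$ reduce to the boundary-branch arguments, since then $M=0$ and one of $g,h$ becomes monotone.

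\paragraph{Approximate equilibrium --- main obstacle.} When the binary search terminates, $|Y_i-X_i|\leq \epsilon/n$ and $\za_i\in[X_i,Y_i]$; the invariant $\phi(X_i)\geq 0\geq \phi(Y_i)$ maintained during the search, together with the monotonicity of $\phi$, puts the equilibrium $z^*$ in the same interval, so $|\za_i-z^*|\leq \epsilon/n$. Coordinate-wise Lipschitz continuity of $\F$ with constant $C$ then gives $|g(\za_i)-g(z^*)|,\,|h(\za_i)-h(z^*)|\leq C\epsilon/n$, and since $\max(\cdot,\cdot)$ is $1$-Lipschitz in each argument, the max-term in $\val_S$ perturbs by at most $C\epsilon/n$ as well. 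Summing the three perturbations yields $|\val_S(\za_i,\xo_i)-\val_S(z^*,\xo_i)|\leq 2C\epsilon/n$, hence $\val_S(\za_i,\xo_i)\geq -2C\epsilon/n$. The main obstacle is precisely this transfer step: the exact-equilibrium bound is tight at $z^*=\tfrac{1}{2}$, leaving no slack at all, so the Lipschitz-induced perturbation must fit exactly within the $2C\epsilon/n$ error budget --- which it does, completing the proof in conjunction with \Cref{lem:reduction-to-zero-sum-strong}.
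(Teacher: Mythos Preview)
Your proof is correct and follows the same overall architecture as the paper: dispatch the two boundary branches with direct monotonicity arguments, prove $\val_S(z^*,\xo_i)\geq 0$ at the exact equilibrium, and transfer via Lipschitz continuity. The one noteworthy difference is in the binary-search branch: the paper splits into cases $\tilde z\geq \xo_i$ and $\tilde z<\xo_i$ to determine which argument of the max is active, then integrates each piece separately, arriving at $\tfrac{(\alpha-\beta)^2}{2(\alpha+\beta)}\geq 0$ with $\alpha=g'(\tilde z)$, $\beta=-h'(\tilde z)$. You instead bound the max-term uniformly over all $\xo_i$ via the tangent-line inequality, obtaining $M=(1-z^*)g'(z^*)=z^*(-h'(z^*))$, and then use $\tfrac{z^*}{1-z^*}+\tfrac{1-z^*}{z^*}\geq 2$. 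Unwinding your bound with $z^*=\alpha/(\alpha+\beta)$ yields exactly the paper's expression, so the two arguments are numerically equivalent; yours avoids the case split at the cost of treating the edge cases $z^*\in\{0,1\}$ separately, which is a fair trade.
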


\begin{proof}[\textbf{Proof of \Cref{prop:main-strong}}]
Consider the easy case where $\dvx{\F}{x_i}(0,\Xbf^{(i-1)}_{-i})< 0$ and $\dvx{\F}{x_i}(1,\Ybf^{(i-1)}_{-i})\leq 0$. In this case, $\za_i=0$ and hence $g(\za_i)=g(0)=0$. Moreover, because of the \texttt{Strong DR-SM} property,
\begin{align*}
&h(0)=\F(0,\Ybf^{(i-1)}_{-i})-\F(1,\Ybf^{(i-1)}_{-i})\geq -\dvx{\F}{x_i}(1,\Ybf^{(i-1)}_{-i})\geq 0, \\
&h(\xo_i)-h(0)\leq g(\xo_i)-g(0)\leq \xo_i\cdot\dvx{\F}{x_i}(0,\Xbf^{(i-1)}_{-i})\leq 0, 
\end{align*}
and therefore  $\val_S(\za_i,\xo_i)=\frac{1}{2}g(0)+\frac{1}{2}h(0)-\max\left(g(\xo_i)-h(0),h(\xo_i)-h(0)\right)\geq 0$. The other easy case is when $\dvx{\F}{x_i}(0,\Xbf^{(i-1)}_{-i})\geq 0$ and $\dvx{\F}{x_i}(1,\Ybf^{(i-1)}_{-i})>0$. In this case $\za_i=1$ and a similar proof shows $\val_S(1,\xo_i)\geq 0$. 

Note that because of \Cref{lem:alpha-beta} we have $\dvx{\F}{x_i}(0,\Xbf^{(i-1)}_{-i})-\dvx{\F}{x_i}(1,\Ybf^{(i-1)}_{-i})\geq 0$, and hence the only remaining case (the not-so-easy one) is when $\dvx{\F}{x_i}(0,\Xbf^{(i-1)}_{-i})\geq 0$ and $\dvx{\F}{x_i}(1,\Ybf^{(i-1)}_{-i})\leq 0$. In this case,  \Cref{alg:binary-search} runs the binary search and ends up at a point $\za_i$. Because of the monotonicity and continuity of the equilibrium condition of the binary search, there exists $\tilde{z}$ that is $(\epsilon/n)$-close to $\za_i$ and $\dvx{\F}{x_i}(\tilde{z},\Xbf_{-i})(1- \tilde{z})+\dvx{\F}{x_i}(\tilde{z},\Ybf_{-i})\tilde{z}=0$. By a straightforward calculation using the Lipschitz continuity of $\F$ with constant $C$ and knowing that $\lvert\tilde{z}-\za_i \rvert\leq \epsilon/n$, we have: 
$$\val_S(\za_i,\xo_i)= \frac{1}{2}g(\za_i)+\frac{1}{2}h(\za_i)-\max\left(g(\xo_i)-g(\za_i),h(\xo_i)-h(\za_i)\right)\geq \val_S(\tilde{z},\xo_i)-\frac{2C\epsilon}{n}$$
So, we only need to show $\val_S(\tilde{z},\xo_i)\geq 0$. Let $\alpha\triangleq \dvx{\F}{x_i}(\tilde{z},\Xbf^{(i-1)}_{-i})$ and $\beta\triangleq-\dvx{\F}{x_i}(\tilde{z},\Ybf^{(i-1)}_{-i})$. Because of \Cref{lem:alpha-beta}, $\alpha+\beta\geq 0$. Moreover, $\alpha(1-\tilde{z})=\beta\tilde{z}$, and therefore we should have $\alpha\geq 0$ and $\beta\geq 0$. We now have two cases:
\paragraph{Case 1 ($\mathbf{\tilde{z}\geq \xo_i}$):} $g(\xo_i)-g(\tilde{z})\leq h(\xo_i)-h(\tilde{z})$ due to \texttt{strong DR-SM} and that $\tilde{z}\geq \xo_i$, so:
\begin{align*}
\val_S(\tilde{z},\xo_i)&=\frac{1}{2}g(\tilde{z})+\frac{1}{2}h(\tilde{z})+(h(\tilde{z})-h(\xo_i))\\
&=\frac{1}{2}\int_0^{\tilde{z}}\dvx{\F}{x_i}(x,\Xbf^{(i-1)}_{-i})dx+\frac{1}{2}\int_{\tilde{z}}^1-\dvx{\F}{x_i}(x,\Ybf^{(i-1)}_{-i})dx+\int_{\tilde{z}}^{\xo_i}-\dvx{\F}{x_i}(x,\Ybf^{(i-1)}_{-i})\\
&\overset{(1)}{\geq} \frac{\tilde{z}}{2}\cdot\dvx{\F}{x_i}(\tilde{z},\Xbf^{(i-1)}_{-i})+\frac{(1-\tilde{z})}{2}\cdot\left(-\dvx{\F}{x_i}(\tilde{z},\Ybf^{(i-1)}_{-i})\right)+(\xo_i-\tilde{z})\left(-\dvx{\F}{x_i}(\tilde{z},\Ybf^{(i-1)}_{-i})\right)\\
&=\frac{\tilde{z}\alpha}{2}+\frac{(1-\tilde{z})\beta}{2}+(\xo_i-\tilde{z})\beta\\
&\overset{(2)}{\geq} \frac{\tilde{z}\alpha}{2}+\frac{(1-\tilde{z})\beta}{2}-\tilde{z}\beta\\
&\overset{(3)}{=}\frac{\alpha^2}{2(\alpha+\beta)}+\frac{\beta^2}{2(\alpha+\beta)}-\frac{\alpha\beta}{(\alpha+\beta)}=\frac{(\alpha-\beta)^2}{2(\alpha+\beta)}\geq 0,
\end{align*} 
where inequality (1) holds due to the coordinate-wise concavity of $\F$, inequality (2) holds as $\beta\geq 0$ and $\xo_i\geq 0$, and equality (3) holds as $\beta\tilde{z}=\alpha(1-\tilde{z})$. 
\paragraph{Case 2 ($\mathbf{\tilde{z}< \xo_i}$):} This case is the reciprocal of Case 1, with a similar proof. Note that $g(\xo_i)-g(\tilde{z})\geq h(\xo_i)-h(\tilde{z})$ due to \texttt{strong DR-SM} and the fact that $\tilde{z}< \xo_i$, so:
\begin{align*}
\val_S(\tilde{z},\xo_i)&=\frac{1}{2}g(\tilde{z})+\frac{1}{2}h(\tilde{z})+(g(\tilde{z})-g(\xo_i))\\
&=\frac{1}{2}\int_0^{\tilde{z}}\dvx{\F}{x_i}(x,\Xbf^{(i-1)}_{-i})dx+\frac{1}{2}\int_{\tilde{z}}^1-\dvx{\F}{x_i}(x,\Ybf^{(i-1)}_{-i})dx+\int_{\xo_i}^{\tilde{z}}\dvx{\F}{x_i}(x,\Xbf^{(i-1)}_{-i})\\
&\overset{(1)}{\geq} \frac{\tilde{z}}{2}\cdot\dvx{\F}{x_i}(\tilde{z},\Xbf^{(i-1)}_{-i})+\frac{(1-\tilde{z})}{2}\cdot\left(-\dvx{\F}{x_i}(\tilde{z},\Ybf^{(i-1)}_{-i})\right)+(\tilde{z}-\xo_i)\left(\dvx{\F}{x_i}(\tilde{z},\Xbf^{(i-1)}_{-i})\right)\\
&=\frac{\tilde{z}\alpha}{2}+\frac{(1-\tilde{z})\beta}{2}+(\tilde{z}-\xo_i)\alpha\\
&\overset{(2)}{\geq} \frac{\tilde{z}\alpha}{2}+\frac{(1-\tilde{z})\beta}{2}+(\tilde{z}-1)\alpha\\
&\overset{(3)}{=}\frac{\alpha^2}{2(\alpha+\beta)}+\frac{\beta^2}{2(\alpha+\beta)}-\frac{\alpha\beta}{(\alpha+\beta)}=\frac{(\alpha-\beta)^2}{2(\alpha+\beta)}\geq 0,
\end{align*} 
where inequality (1) holds due to the coordinate-wise concavity of $\F$, inequality (2) holds as $\alpha\geq 0$ and $\xo_i\leq 1$, and equality (3) holds as $\beta\tilde{z}=\alpha(1-\tilde{z})$. 
\end{proof}
Combining \Cref{prop:main-strong} and \Cref{lem:reduction-to-zero-sum-strong} for $\delta=2C\epsilon$ finishes the analysis and the proof of \Cref{thm:alg-strong}.


\section{Experimental Results}

\label{sec:exp}
We empirically measure the solution quality of three algorithms: \Cref{alg:bigreedy} (\NRWGameAlg), \Cref{alg:binary-search} (\NRWBinarySearchAlg) and the Bi-Greedy algorithm of \cite{bian2017guaranteed} (\BMBKAlg). These are all based on a double-greedy framework, which we implemented to iterate over coordinates in a random order. These algorithms also do not solely rely on oracle access to the function; they invoke one-dimensional optimizers, concave envelopes, and derivatives. We implement the first and the second (\Cref{alg:apx-opt} and \Cref{alg:apx-ce} in the supplement), and numerically compute derivatives by discretization. We consider two application domains, namely Non-concave Quadratic Programming (NQP)~\citep{bian2017guaranteed,kim2003exact,luo2010semidefinite}, under both \texttt{strong-DR} and \texttt{weak-DR}, and maximization of softmax extension for MAP inference of determinantal point process\citep{kulesza2012determinantal, gillenwater2012near}. Each experiment consists of twenty repeated trials. For each experiment, we use $n = 100$ dimensional functions. Our experiments were implemented in python. See the supplementary materials for the detailed specifics of each experiment. The results of our experiments are  in \Cref{tab:average-qual}, and the corresponding box and whisker plots are in \Cref{fig:box-whisker}. The data suggests that for all three experiments the three algorithms obtain very similar objective values. For example, in the \texttt{weak-DR} NQP experiment, the upper and lower quartiles are distant by roughly $10$, while the mean values of the three algorithms deviate by less than $1$.


\begin{table}[h]

\centering
\leftskip-0.6cm
\begin{tabular}{|c|c|c|c|}
\hline
    & \small{NQP,~$\forall i,j: H_{i,j}\leq 0, $ (\texttt{strong-DR})} & \small{NQP,~$\forall i\neq j: H_{i,j}\leq 0, $ (\texttt{weak-DR})} & \small{Softmax Ext. (\texttt{strong-DR})}  \\\hhline{|=|=|=|=|}
\multicolumn{1}{||l||}{\NRWGameAlg }                  &$1225.840375$                                       &$1203.288522$  & $24.056934$ \\ \hline
\multicolumn{1}{||l||}{\NRWBinarySearchAlg}              &                                      $1225.816408$ &$1202.737999$  &$23.945428$  \\ \hline
\multicolumn{1}{||l||}{\BMBKAlg}                 &                                 $1225.738044$ &$1203.424957$  & $24.055435$. \\ \hline
\end{tabular}

\caption{\label{tab:average-qual}Average objective value of $T=20$ repeated trials, with dimension $n=100$}
\end{table}
\begin{figure}
        \centering
 
        \begin{subfigure}{0.5\textwidth}
        \centering
           
                \includegraphics[scale=0.6]{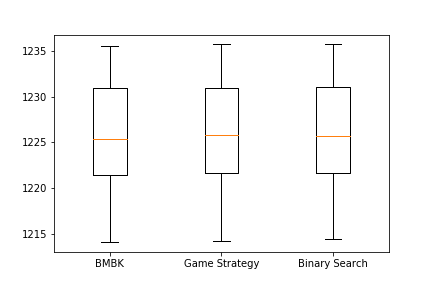}
\caption{\texttt{Strong DR-SM} NQP}
\label{fig:qp}
        \end{subfigure}
        \begin{subfigure}{0.5\textwidth}
        \centering
               \includegraphics[scale=0.6]{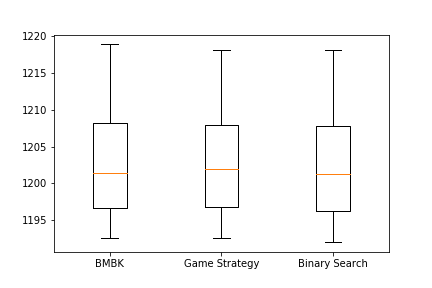}
\caption{\texttt{Weak DR-SM} NQP}
\label{fig:qpweak}
        \end{subfigure} \\
        \begin{subfigure}{0.5\textwidth}
        \centering
               \includegraphics[scale=0.6]{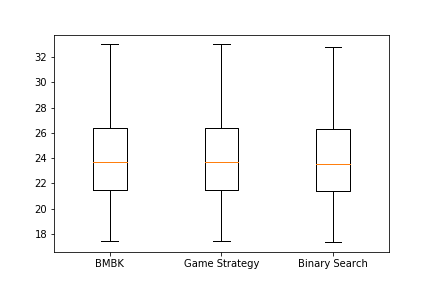}
\caption{\texttt{Strong DR-SM} Softmax}
\label{fig:softmax}
        \end{subfigure}
        \caption{\label{fig:box-whisker}Box and whisker plots of our experimental results.}
\end{figure}




\section{Conclusion}
We proposed a tight approximation algorithm for continuous submodular maximization, and a quasilinear time tight approximation algorithm for the special case of DR-submodular maxmization. Our experiments also verify the applicability of these algorithms in practical domains in machine learning. One interesting avenue for future research is to generalize our techniques to maximization over any arbitrary separable convex set, which would broaden the application domains.
\section{Acknowledgments}
Rad Niazadeh was supported by Stanford Motwani fellowship. The authors would also like to thank Jan Vondr\'ak for helpful comments and discussions on an earlier draft of this work. 
\bibliographystyle{plainnat}
\bibliography{refs}
\appendix
\section*{Supplementary Materials}

\section*{Equivalent definitions of weakly and strongly DR-SM functions.}
\begin{proposition}[\citep{bian2017guaranteed}] 
\label{prop:weak-strong-multi-defs}
Suppose $\F:[0,1]^n\rightarrow [0,1]$ is continuous and twice differentiable, and $\Hbf$ is the Hessian of $\F$, i.e. $\forall i, j\in[n],~H_{ij}\triangleq \dvxy{\F}{x_i}{x_j}$. The followings are equivalent: 
\begin{enumerate}
\item $\F$ satisfies the \texttt{\emph{weak DR-SM}} property as in \Cref{def:weak-strong-DR}.
\item Continuous submodularity: $\forall \xv,\yv\in[0,1]^n$, $\F(\xv)+\F(\yv)\geq \F(\xv\vee\yv)+\F(\xv\wedge\yv)$.
\item $\forall i\neq j\in[n],~H_{ij}\leq 0$, \ie all off-diagonal entries of Hessian are non-positive.
\end{enumerate}
Also, the following statements are equivalent:
\begin{enumerate}
\item$\F$ satisfies the \texttt{\emph{strong DR-SM}} property as in \Cref{def:weak-strong-DR}.
\item $\F(.)$ is coordinate-wise concave along all the coordinates and is continuous submodular, i.e. $\forall \xv,\yv\in[0,1]^n$, $\F(\xv)+\F(\yv)\geq \F(\xv\vee\yv)+\F(\xv\wedge\yv)$ 
\item $\forall i, j\in[n],~H_{ij}\leq 0$, \ie all entries of Hessian are non-positive.
\end{enumerate}
\end{proposition}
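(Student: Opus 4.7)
}

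The plan is to prove each of the two three-way equivalences cyclically. For the weak DR-SM equivalence, I will follow the chain (1)$\Rightarrow$(3)$\Rightarrow$(2)$\Rightarrow$(1). For (1)$\Rightarrow$(3), I divide both sides of the weak DR-SM inequality by $\delta$ and let $\delta\to 0$; this yields $\dvx{\F}{x_i}(z,\xv_{-i})\geq \dvx{\F}{x_i}(z,\yv_{-i})$ whenever $\xv_{-i}\leq \yv_{-i}$. Applying this coordinate by coordinate and letting the gap close yields $H_{ij}=\dvxy{\F}{x_i}{x_j}\leq 0$ for every $j\neq i$. For (3)$\Rightarrow$(2), I use the fundamental theorem of calculus twice on the ``rectangle'' with corners $\xv\wedge\yv,\xv,\yv,\xv\vee\yv$: writing $\F(\xv\vee\yv)-\F(\xv)-\F(\yv)+\F(\xv\wedge\yv)$ as a double integral of mixed partials $H_{ij}$ over the pairs of coordinates on which $\xv$ and $\yv$ differ, each such mixed partial is non-positive so the whole expression is $\leq 0$. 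For (2)$\Rightarrow$(1), I set $\xv'=(z+\delta,\xv_{-i})$ and $\yv'=(z,\yv_{-i})$; since $\xv_{-i}\leq \yv_{-i}$ and $\delta\geq 0$, we have $\xv'\vee\yv'=(z+\delta,\yv_{-i})$ and $\xv'\wedge\yv'=(z,\xv_{-i})$, and the submodularity inequality rearranges exactly to the weak DR-SM inequality.

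For the strong DR-SM equivalence, the cyclic plan is (1)$\Rightarrow$(2)$\Rightarrow$(3)$\Rightarrow$(1). For (1)$\Rightarrow$(2), I specialize the strong DR-SM inequality in two ways: first taking $\xv_{-i}=\yv_{-i}$ but $x_i\leq y_i$ yields coordinate-wise concavity of $\F$ in coordinate $i$; second taking $x_i=y_i$ but $\xv_{-i}\leq \yv_{-i}$ yields weak DR-SM, which by the first equivalence already proved is continuous submodularity. For (2)$\Rightarrow$(3), coordinate-wise concavity gives $H_{ii}\leq 0$, and the off-diagonal entries of $\Hbf$ are non-positive by the (3) part of the first equivalence; together all entries of $\Hbf$ are non-positive. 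For (3)$\Rightarrow$(1), I note that all entries of $\Hbf$ being non-positive means $\dvx{\F}{x_i}$ is non-increasing in every coordinate; for $\xv\leq \yv$ and $\delta\geq 0$ I then write
\begin{equation*}
\F(x_i+\delta,\xv_{-i})-\F(\xv)=\int_0^\delta \dvx{\F}{x_i}(x_i+t,\xv_{-i})\,dt \geq \int_0^\delta \dvx{\F}{x_i}(y_i+t,\yv_{-i})\,dt = \F(y_i+\delta,\yv_{-i})-\F(\yv),
\end{equation*}
which is exactly strong DR-SM.

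The only nontrivial step is (3)$\Rightarrow$(2) in the weak case, i.e.\ recovering the global submodularity inequality from the pointwise Hessian condition. The right way to handle this is to reduce to the two-coordinate case: fix the coordinates on which $\xv$ and $\yv$ agree, and for each pair $i<j$ where they disagree, verify that the restriction of $\F$ to the axis-aligned rectangle spanned by those two coordinates is submodular by writing the ``four-corner'' quantity as $\int\!\!\int H_{ij}$ over that rectangle. Iterating this across all disagreeing coordinate pairs via a telescoping path from $\xv\wedge\yv$ to $\xv\vee\yv$ through the intermediate vertices completes the argument. Every other implication is a direct application of the definitions, the fundamental theorem of calculus, or a previously proven part of the proposition.
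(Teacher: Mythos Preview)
The paper does not actually supply a proof of this proposition; it is stated in the supplementary materials with attribution to \citet{bian2017guaranteed} and no argument is given. So there is nothing in the paper to compare your proof against.

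That said, your proof is correct and self-contained. The cyclic chains you set up are the standard ones, and each implication goes through as you describe. The only place worth tightening is the (3)$\Rightarrow$(2) step in the weak case: your phrasing ``iterating across all disagreeing coordinate pairs via a telescoping path'' is a bit loose. The clean way is to let $A=\{i:x_i<y_i\}$ and $B=\{i:x_i>y_i\}$, observe that $\xv\vee\yv$ differs from $\yv$ (and $\xv$ from $\xv\wedge\yv$) only on the $B$-coordinates, and then write
\[
\F(\xv\vee\yv)-\F(\yv)-\F(\xv)+\F(\xv\wedge\yv)
\]
as a single double integral of $\sum_{i\in B}\sum_{j\in A} H_{ij}$ over the box $[\xv\wedge\yv,\xv\vee\yv]$ (parameterize the $B$-coordinates and the $A$-coordinates separately). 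Since $A\cap B=\emptyset$, every $H_{ij}$ appearing is off-diagonal and hence $\leq 0$. This is what your sketch is gesturing at, but stating it this way avoids any ambiguity about what ``telescoping path through intermediate vertices'' means when more than two coordinates disagree.
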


\section*{Detailed specifics of experiments in \Cref{sec:exp}}
\subsection*{\texttt{Strong-DR} Non-concave Quadratic Programming (NQP)}
 We generated synthetic functions of the form $\F(\xbf) = \frac12 \xbf^T \Hbf \xbf + \hbf^T \xbf + c$. We generated $\Hbf \in \RR^{n \times n}$ as a matrix with every entry \emph{uniformly distributed} in $[-1, 0]$, and then symmetrized $\Hbf$. We then generated $\hbf \in \RR^{n}$ as a vector with every entry uniformly distributed in $[0, +1]$. Finally, we solved for the value of $c$ to make $\F(\vec{0}) + \F(\vec{1}) = 0$.
\subsection*{\texttt{Weak-DR} Non-concave Quadratic Programming (NQP)}

This experiment is the same as in the previous subsection, except that the diagonal entries of $\Hbf$ are uniformly distributed in $[0, +1]$ instead, making the resulting function $\F(\xbf)$ only \texttt{weak DR-SM} instead.
\subsection*{Softmax extension of Determinantal Point Processes (DPP)}

We generated synthetic functions of the form $\F(\xbf) = \log \det (\text{diag}(\xbf) (\Lbf - \Ibf) + \Ibf)$, where $\Lbf$ needs to be positive semidefinite. We generated $\Lbf$ in the following way. First, we generate each of the $n$ eigenvalues by drawing a uniformly random number in $[-0.5, 1.0]$ and taking that power of $e$. This yields a diagonal matrix $\Dbf$. We then generate a random unitary matrix $\Vbf$ and then set $\Lbf = \Vbf \Dbf \Vbf^T$. By construction, $\Lbf$ is positive semidefinite and has the specified eigenvalues.


\section*{More application domain details}

Here is a list containing further details about applications in machine learning, electrical engineering and other application domains.
\paragraph{Special Class of Non-Concave Quadratic Programming (NQP).}
\begin{itemize}
\item The objective is to maximize $\F(\xbf)=\frac{1}{2}\xbf^T\Hbf\xbf+\hbf^T\xbf+c$, where off-diagonal entries of $\Hbf$ are non-positive (and hence these functions are \texttt{Weak DR-SM}).
\item Minimization of this function (or equivalently maximization of this function when off-diagonal entries of $\Hbf$ are non-negative) have been studied in \citet{kim2003exact} and \citet{luo2010semidefinite}, and has applications in communication systems and detection in MIMO channels~\citep{luo2010semidefinite}.
\item Another application of quadratic submodular optimization is large-scale price optimization on the basis of demand forecasting models, which has been studied in~\cite{ito2016large}. They show the price optimization problem is indeed an instance of weak-DR minimization.
\end{itemize}
\paragraph{Revenue Maximization over Social Networks.}
\begin{itemize}
\item The model was proposed in~\cite{bian2017guaranteed} and is a generalization of the revenue maximization problem addressed in \cite{hartline2008optimal}.
\item A seller wishes to sell a product to a social network of buyers. We consider restricted seller strategies which freely give (possibly fractional) trial products to buyers: this fractional assignment is our input $\xbf$ of interest.
\item The objective takes two effects into account: (i) the revenue gain from buyers who didn't receive free product, where the revenue function for each such buyer is a nonnegative nondecreasing \texttt{Weak DR-SM} function and (ii) the revenue loss from those who received free product, where the revenue function for each such buyer is a nonpositive nonincreasing \texttt{Weak DR-SM} function. The combination for all buyers is a nonmonotone \texttt{Weak DR-SM} function and additionally is nonnegative at $\vec{0}$ and $\vec{1}$.
\end{itemize}
\paragraph{Map Inference for Determinantal Point Processes (DPP) \& Its Softmax-Extension.}
\begin{itemize}
\item DPP are probabilistic models that arise in statistical physics and random matrix theory, and their applications in machine learning have been recently explored, e.g.~\citep{kulesza2012determinantal}.
\item DPPs can be used as generative models in applications such as text summarization, human pose estimation, or news threading tasks~\citep{kulesza2012determinantal}.
\item A discrete DPP is a distribution over sets, where $p(S)\sim \det(A_S)$ for a given PSD matrix $A$. The log-likelihood estimation task corresponds to picking a set $\hat{S}\in\mathcal{P}$(feasible set, e.g. a matching) that maximizes $f(S)=\log(\det(A_S))$. This function is non-monotone and submodular. Note that as a technical condition to apply bi-greedy algorithms, we require that $\det(A)\geq 1$ (implying $f(\vec{1})\geq 0$).
\item The approximation question was studied in \citep{gillenwater2012near}. Their idea is to first find a fractional solution for a continuous extension (hence a a continuous submodular maximization step is required) and then rounding the solution. However, they sometimes need a fractional solution in $\textrm{conv}(\mathcal{P})$ (so, the optimization task sometimes fall out of the hypercube, making rounding more complicated). 
\item Beyond multilinear extension, the other continuous extension that has been used in this literature is called the \emph{softmax extension}~\citep{gillenwater2012near,bian2017continuous}: $$\F(\xbf)=\log\mathbf{E}_{S\sim\mathcal{I}_\xbf}[\exp(f(S))]=\log\det\left(\textrm{diag}(\xbf)(A-I)+I\right)$$
where $\mathcal{I}_{\xbf}$ is the independent distribution with marginals $\xbf$ (i.e. each item $i$ is independently in the set w.p. $x_i$). 
\item $\F(\xbf)$ is \texttt{Strong DR-SM} and non-monotone~\citep{bian2017continuous}. In almost all machine learning applications, the rounding works on an unrestricted problem. Hence the optimization that needs to be done is \texttt{Strong DR-SM} optimization over unit hypercube.
\item One can think of adding a regularizer term $\lambda \lVert\xbf \rVert^2$ to the log-likelihood objective function to avoid overfitting. In that case, the underlying fractional problem becomes a \texttt{Weak DR-SM} optimization over the unit hypercube when $\lambda$ is large enough.
\end{itemize}
\paragraph{Log-Submodularity and Mean-Field Inference.}
\begin{itemize}
\item Another probabilistic model that generalizes DPP and all other strong Rayleigh measures~\citep{li2016fast,zhang2015higher} is the class of \emph{log-submodular} distributions over sets, i.e. $p(S)\sim\exp(f(S))$ where $f(\cdot)$ is a discrete submodular functions. MAP inference over this distribution has applications in machine learning and beyond~\citep{djolonga2014map}.
\item One variational approach towards this MAP inference task is to do \emph{mean-field inference} to approximate the distribution $p$ with a product distribution $\xbf\in [0,1]^n$, i.e. finding  $\xbf^*$ that:
$$\xbf^*\in\underset{\xbf\in[0,1]^n}{\argmax}~~\mathbb{H}(\xbf)-\mathbf{E}_{S\sim \mathcal{I}_x}[\log p(S)]=\underset{\xbf\in[0,1]^n}{\argmin}~~\textrm{KL}(\xbf||p)$$
where $\textrm{KL}(\xbf||p)=\mathbf{E}_{S\sim \mathcal{I}}[\frac{\log \mathcal{I}_x(S)}{\log p(S)}]$.
\item The function $\F(\xbf)=\mathbb{H}(\xbf)-\mathbf{E}_{S\sim \mathcal{I}_x}[\log p(S)]$ is \texttt{Strong DR-SM}~\citep{bian2017continuous}.
\end{itemize}
\paragraph{Cone Extension of Continuous Submodular Maximization.}
\begin{itemize}
\item Suppose $\mathcal{K}$ is a proper cone. By considering the lattice corresponding to this cone one can generalize DR submodularity to $\mathcal{K}$-DR submodularity~\citep{bian2017continuous}.
\item An interesting application of this cone generalization is minimizing the loss in the logistic regression model with a particular non-separable and non-convex regularizer, as described in~\citep{antoniadis2011penalized,bian2017continuous}. \citet{bian2017continuous} show the vanilla version is a $\mathcal{K}$-\texttt{Strong DR-SM} function maximization for some particular cone.
\item Note that by adding a $\mathcal{K}$-$\ell_2$-norm regularizer $\lambda\lVert \textbf{A}\xbf\rVert^2$, the function will become \texttt{Weak DR-SM}, where $\textbf{A}$ is a matrix with generators of $\mathcal{K}$ as its column. Here is the logistic loss:
$$ l(\xbf,\{y_t\})=\frac{1}{T}\sum_{t=1}^T f_t(\xbf,y_t)=\frac{1}{T}\sum_{t=1}^T \log(1+\exp(-y_t\xbf^T\mathbf{z}^t))$$
where $y_t$ is the label of the $t^{\textrm{th}}$ data-point, $\xbf$ are the model parameters, and $\{\mathbf{z}^t\}$ are feature vectors of the data-points.
\end{itemize}
\begin{remark}
In many machine learning applications, and in particular MAP inference of DPPs and log-submodular distributions, unless we impose some technical assumptions, the underlying \texttt{Strong DR-SM} (or \texttt{Weak DR-SM}) function is \emph{not} necessarily positive (or may not even satisfy the weaker yet sufficient condition $\F(\vec{0})+\F(\vec{1})\geq 0$). In those cases, adding a positive constant to the function can fix the issue, but the multiplicative approximation guarantee becomes weaker. However, this trick tends to work in practice since these algorithms tend to be near optimal.
\end{remark}

\end{document}